\newcommand{\ep}{\varepsilon}
\newcommand{\R}{\mathbb{R}}
\newcommand{\PP}{\mathbb{P}}
\newcommand{\EE}{\mathbb{E}}
\newcommand{\LL}{\mathcal{L}}
\newcommand{\mbf}{\mathbf}
\newtheorem{assumption}{Assumption}
\renewcommand{\baselinestretch}{1.2}
\begin{document}

\title*{Fick's Law for the Lorentz Model in a weak coupling regime}
\author{Alessia Nota}
\institute{Alessia Nota \at Dipartimento di Matematica, Universit\`a di Roma La Sapienza\\ 
Piazzale Aldo Moro 5, 00185 Roma -- Italy ,
 \email{nota@mat.uniroma1.it}}
%
%
\maketitle

\abstract{In this paper we deal with further recent developments, strictly connected to 
the result obtained in \cite{BNPP}. 
We consider the Lorentz gas out of equilibrium in a weak coupling regime. 
Each obstacle of the Lorentz gas generates a smooth radially symmetric potential with compact support.  We prove that the macroscopic current in the stationary state is given by the Fick's law of diffusion. 
The diffusion coefficient is given by the Green-Kubo formula associated to the generator of the diffusion process dictated by the linear Landau equation. }

\section{Introduction}

The understanding of transport phenomena of nonequilibrium thermodynamics starting form the microscopic dynamics 
is one of the most challenging problem in statistical mechanics. 

Nonequilibrium stationary states describe the state of a mechanical system driven and maintained out of equilibrium. 
Their main characteristic 
is that they commonly exhibit transport phenomena. They sustain steady flows (e.g. energy flow, particles flow or momentum flow) and the usually conserved quantities ( mass, momentum and energy) flow in response to a gradient. For instance the heat flow and the mass flow appear in response to a temperature gradient and a concentration gradient respectively. These processes are well described by phenomenological linear laws, the Fourier's and Fick's law respectively.

In the current literature there are very few rigorous results concerning the derivation of the these phenomenological laws from a microscopic model (see for instance [LS], [LS1], [LS2]). A contribution in this direction is the validation of the Fick's law for the Lorentz model in a low density situation which has been recently proven in \cite{BNPP}.
To consider the system out of equilibrium, in \cite{BNPP}, they study the Lorentz gas in a bounded region in the plane and couple the system with two mass reservoirs at the boundaries. 
More precisely they consider the slice $\Lambda= (0,L)\times \R$ in the plane. In the left half plane there is a free gas of light particles at density $\rho_1$, in the right half plane there is a free gas of light particles at density $\rho_2$ which play the role of mass reservoirs. The light particles are not interacting among themselves. Inside $\Lambda$ there is a Poisson distribution of intensity $\mu$ of hard core scatterers. The light particles flow through the boundaries and are 
elastically reflected by the scatterers.  
For this model they prove the existence of a stationary state for which
\begin{equation}
\label{BNPPL:FL}
J \approx -D \nabla \rho 
\end{equation}
where $J$ is the mass current, $\rho$ is the mass density and $D>0$ is the diffusion coefficient. Formula \eqref{BNPPL:FL} is the well known Fick's law whose validity has been proven in \cite{BNPP}.
We remind that according to the low-density regime considered they can use the linear Boltzmann equation as a bridge between the original mechanical system and the diffusion equation. This strategy works since they provide an explicit control of the error in the kinetic limit which suggests the scale of times for which the diffusive limit can be achieved. 
The result is presented in a two dimensional setting but it holds in dimension higher than two. The two dimensional case is the most interesting to analyze since the pathologic configurations preventing the Markovianity on a kinetic scale are harder to estimate in this case.  

We may wonder if the same result could be achieved if we slightly modify the model. We consider the same geometry described above but inside $\Lambda$ now we have a Poisson distribution of scatterers which are no longer hard cores. We assume that each obstacle generates a smooth, radial, short-range potential. In the same spirit as in \cite{BNP}, \cite{ESY}, we scale 
%
the range of the interaction and the density of the scatterers according to
\begin{equation}\label{BNPPL:INTRO:WC}
\left.\begin{array}{ll}
\phi_\ep(x)={\ep}^\alpha\phi(\frac{x}{\ep}),&\\\vspace{1.5mm}
\mu_{\ep}=\ep^{-(2\alpha+\lambda+1)}\mu,&
\end{array}\right.
\end{equation}
with $\alpha\in (0,\frac 1 2)$ and $\lambda>0$. 

The scaling \eqref{BNPPL:INTRO:WC} means that the kinetic regime describes the system for kinetic times $O(1)$ (i.e. $\lambda=0$). Observe that when $\lambda=0$ the limiting cases $\alpha=0$ and $\alpha=1/2$ correspond respectively to the low density limit and the weak-coupling limit. In the intermediate scale between the low density and the weak-coupling regime the kinetic equation that appears in the limit is the linear Landau equation. 
One can go further to diffusive times provided that $\lambda>0$ is not too large. The intermediate level of description between the mechanical system and the diffusion equation is given by the linear Landau equation with a divergent factor in front of the collision operator.
Since the scale of time for which the system diffuses should not prevent the Markov property, there is a constraint on $\lambda$. More precisely there exists a threshold $\lambda_0=\lambda(\alpha)$, emerging from the explicit estimate of the set of pathological configurations producing memory effects,
s.t. for $\lambda<\lambda(\alpha)$, the microscopic solution of the time dependent problem converges to the solution of the heat equation in the limit $\ep\to 0$.  
We refer to \cite{BNP}, Section 6, for further details. The result mentioned above concerns the time dependent problem. 
In this paper we deal with the stationary problem and provide a rigorous derivation of Fick's law of diffusion for this model. We prove that there exists a unique stationary solution for the microscopic dynamics which converges to the stationary solution of the heat equation, namely to the linear profile of the density.
We underline that in order to obtain the stationary solution of the microscopic dynamics we need to characterize the stationary solution of the linear Landau equation. To handle this problem 
we will use the analysis of the time dependent problem and the explicit solution of the heat equation.

\section{The model and main results}\label{BNPPL:sec2}
Let $\Lambda \subset \R^2$ be the strip $(0,L)\times \R$.
We consider a Poisson distribution of fixed disks (scatterers) of radius $\ep$ in $\Lambda$ and denote by $c_1,\dots,c_N\in\Lambda$ their centers. 
This means that, given $\mu>0$, the probability density of finding $N$ obstacles in a bounded measurable set $A\subset\Lambda$ is 
\begin{equation}\label{BNPPL:poisson}
\PP(\,d\mbf{c}_{N})=e^{-\mu |A|}\frac{\mu^N}{N!}\,dc_1\dots\,dc_N
\end{equation}
where $|A|=\text{meas}A$ and $\mbf{c}_{N}=(c_1,\dots, c_N)$. Since the modulus of the velocity of the test particle is constant, we assume it to be equal to one, so that the phase space of our system is $\Lambda\times S_1$.

We rescale the intensity $\mu$ of the obstacles as 
$$
\mu_{\varepsilon}=\varepsilon^{-2\alpha-1}\ep^{-\lambda}\mu,\quad \alpha\in (0,1/8),\,\lambda>0
$$
where, from now on, $\mu>0$ is fixed. More precisely we make the following assumption.
\begin{assumption}\label{BNPPL:A1}
We set $\gamma=1-8(\alpha+\lambda/2)$, the parameter $\lambda$ is such that as $\varepsilon \to 0$,
\begin{equation}\label{BNPPL:assump}
\ep^{\gamma-4\lambda}\to 0,
\end{equation}
namely $\lambda<\frac{1-8\alpha}{8}$.
\end{assumption}

Accordingly, we denote by $\PP_{\ep}$  the probability density \eqref{BNPPL:poisson} with $\mu$ replaced by $\mu_\ep$. $\EE_{\ep}$ will be the expectation with respect to the measure $\PP_{\ep}$.

We now introduce a radial potential $\phi(r)$ such that
\begin{itemize}
\item $\phi\in C^2([0,1])$,
\item $\phi(0)>0$ and $r\to\phi( r)$ is strictly decreasing in $[0,1]$.
\end{itemize}
We rescale the intensity of the interaction potential as
$$\phi\rightarrow\ep^{\alpha}\phi.$$
Then the Equations of motion are 
\begin{equation}\label{BNPPL:eqscaled}
\left\{\begin{array}{ll}
\dot{x}=v&\\
\dot{v}=-\ep^{\alpha-1}\sum_{i}\nabla\phi(\frac{|x-c_i|}{\ep})&.
\end{array}\right.
\end{equation}
For a given configuration of obstacles $\mbf{c}_N$, we denote by $T^{-t}_{\mbf{c}_{N}}(x,v)$ the (backward) 
flow, solution of \eqref{BNPPL:eqscaled}, with initial datum $(x,v)\in\Lambda\times S_1$ and define
$t-\tau$, $\tau=\tau(x,v,t,\mbf{c}_N)$, as the first (backward) hitting time with the boundary. We use the notation $\tau=0$ to indicate the event such that the trajectory $T^{-s}_{\mbf{c}_{N}}(x,v)$, $s\in [0,t]$, never hits the boundary.
For any $t\geq 0$ the one-particle correlation function reads
\begin{equation}\label{BNPPL:def:fep}
f_{\ep}(x,v,t)=\EE_\varepsilon[f_B (T^{-(t-\tau)}_{\mbf{c}_{N}}(x,v))\chi(\tau>0)] + \EE_\ep[f_0 (T^{-t}_{\mbf{c}_{N}}(x,v))\chi(\tau=0)],
\end{equation} 
where $f_0\in L^\infty(\Lambda\times S_1)$ and the boundary value $f_B$ is defined by
\begin{equation*}
f_B(x,v):=\left\{\begin{array}{ll}
\rho_1 M(v)\quad\text{if}\quad x\in \{0\}\times\R,\quad v_1>0,&\vspace{3mm} \\
\rho_2 M(v)\quad\text{if}\quad x\in \{L\}\times\R,\quad v_1<0,& \vspace{3mm}
\end{array}\right.
\end{equation*}
with $M(v)$ the density of the uniform distribution on $S_1$ and $\rho_1, \rho_2>0$.  Here $v_1$ denotes the horizontal component of the velocity $v$.
Without loss of generality we assume $\rho_2>\rho_1$.
Since $M(v)=\frac{1}{2\pi}$, from now on we will absorb it in the definition of the boundary values $\rho_1, \rho_2$. Therefore we set
\begin{equation}\label{BNPPL:def:fB}
f_B(x,v):=\left\{\begin{array}{ll}
\rho_1\quad\text{if}\quad x\in \{0\}\times\R,\quad v_1>0,&\vspace{3mm} \\
\rho_2\quad\text{if}\quad x\in \{L\}\times\R,\quad v_1<0.& \vspace{3mm}
\end{array}\right.
\end{equation}



We are interested in the stationary solutions $f_{\ep}^S$ of the above problem. More precisely $f_\ep^S(x,v)$ solves  
\begin{equation}\label{BNPPL:def:ST}
f_{\ep}^S(x,v)=\EE_\varepsilon[f_B (T^{-(t-\tau)}_{\mbf{c}_{N}}(x,v))\chi(\tau>0)] + \EE_\ep[f_ \ep^S(T^{-t}_{\mbf{c}_{N}}(x,v))\chi(\tau=0)]. 
\end{equation} 

The main result of the present paper can be summarized in the following theorem.

\begin{theorem}\label{BNPPL:th:MAIN1}
For $\ep$ sufficiently small there exists a unique $L^\infty$ stationary solution $f_\ep^S$ for the microscopic dynamics (i.e. satisfying \eqref{BNPPL:def:ST}). Moreover, as $\varepsilon \to 0$ 
\begin{equation}\label{BNPPL:convergenzaSTAZ}
f_\ep^S\rightarrow \varrho^S,
\end{equation}
where $\varrho^S$ is the stationary solution of the heat equation with the following boundary conditions
\begin{equation}\label{BNPPL:HEAT}
\left\{
 \begin{array}{l}\vspace{0.2cm}
\varrho^S(x)=\rho_1,\ \ \ \ \ x\in \{0\}\times\R,  
\vspace{0.2cm}\\
\varrho^S(x)=\rho_2,\ \ \ \ \ x\in \{L\}\times\R.
\end{array} \right.
\end{equation}
The convergence is in $L^{2}((0,L)\times S_1)$.
\end{theorem}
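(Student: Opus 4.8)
The plan is to separate the statement into two tasks: the existence and uniqueness of $f_\ep^S$, which I would obtain by a contraction argument, and the convergence $f_\ep^S\to\varrho^S$, which I would obtain by comparison with the stationary solution of the linear Landau equation. For the first task I read \eqref{BNPPL:def:ST} as a fixed-point equation $f=\mathcal{L}_\ep f+b_\ep$, with the fixed datum $b_\ep(x,v)=\EE_\ep[f_B(T^{-(t-\tau)}_{\mbf{c}_N}(x,v))\chi(\tau>0)]$ and the linear operator $(\mathcal{L}_\ep g)(x,v)=\EE_\ep[g(T^{-t}_{\mbf{c}_N}(x,v))\chi(\tau=0)]$. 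Since $f_B$ is bounded we have $b_\ep\in L^\infty$, and $\mathcal{L}_\ep$ obeys $\|\mathcal{L}_\ep g\|_\infty\le q\,\|g\|_\infty$ with $q=\sup_{(x,v)}\PP_\ep(\tau=0)$. The crucial point is that, because $\Lambda$ is bounded in the horizontal direction, the speed equals one, and the rescaled intensity $\mu_\ep\to\infty$ produces strong deflection, the event that a backward trajectory never meets the boundary on $[0,t]$ is rare: for $t$ fixed and $\ep$ small, $q$ stays bounded away from $1$ uniformly. Banach's theorem then yields a unique $f_\ep^S\in L^\infty$, with the Neumann representation $f_\ep^S=\sum_{k\ge0}\mathcal{L}_\ep^k b_\ep$, which carries the transparent meaning $f_\ep^S(x,v)=\EE_\ep[f_B(\text{first exit point of the backward flow})]$.

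For the convergence I would follow the kinetic route used in the time-dependent analysis of \cite{BNP}. First I truncate the series at a diffusive horizon $\Theta_\ep=n_\ep t$: the contraction makes the tail $\mathcal{L}_\ep^{n_\ep}f_\ep^S=O(q^{n_\ep})$ negligible, so that $f_\ep^S$ is determined, up to a small error, by trajectories observed only on the finite window $[0,\Theta_\ep]$. On this window I replace the mechanical flow by the Markov (Landau) process, estimating the difference through the measure of the pathological configurations (recollisions and interferences) that spoil Markovianity. Assumption \ref{BNPPL:A1} and the threshold $\lambda<(1-8\alpha)/8$ are precisely what force this error to vanish on the scale $\Theta_\ep$. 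The outcome is a stationary solution $g_\ep^S$ of the rescaled linear Landau equation which is close to $f_\ep^S$ in $L^\infty$.

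Next I would characterize $g_\ep^S$ and pass to the diffusive limit. The Landau collision operator is an angular diffusion on $S_1$ whose null space consists of the velocity-independent functions; projecting the stationary equation onto this null space, the Fredholm solvability condition forces the hydrodynamic part of $g_\ep^S$ to solve, as $\ep\to0$, the stationary heat equation $\Delta\varrho^S=0$ with the boundary data \eqref{BNPPL:HEAT}. Its unique solution is the linear profile interpolating $\rho_1$ and $\rho_2$; observe that, since stationarity reduces the heat equation to Laplace's equation, the diffusion coefficient drops out of the profile altogether. I would identify this limit using the explicit solution of the heat equation together with the time-dependent convergence, exactly as announced in the introduction. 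Chaining the two comparisons, $f_\ep^S$ against $g_\ep^S$ and $g_\ep^S$ against $\varrho^S$, yields $L^\infty$ closeness, which upgrades to convergence in $L^{2}((0,L)\times S_1)$ by uniform boundedness and dominated convergence.

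I expect the main obstacle to be the interchange of the limits $\ep\to0$ and $t\to\infty$: the kinetic error estimates degrade in time, so the infinite-time stationary statement cannot be read off directly from the time-dependent result. The whole difficulty concentrates in choosing the horizon $\Theta_\ep$ large enough that the contraction has erased the tail, yet small enough that the Markovian approximation is still valid — precisely the tension quantified by Assumption \ref{BNPPL:A1}. A secondary difficulty is to extract the stationary Landau solution and its diffusive limit without a spectral gap uniform in $\ep$, which is why the argument must be routed through the explicitly known time-dependent dynamics and the explicit harmonic profile rather than through an abstract stationarity statement.
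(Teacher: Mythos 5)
The fixed-point step in your first paragraph fails as stated. You claim that for fixed $t$ and small $\ep$ the non-exit probability $q=\sup_{(x,v)}\PP_\ep(\tau=0)$ stays bounded away from $1$, crediting the divergent intensity $\mu_\ep$ and the ``strong deflection'' with making non-exit rare. The physics is the opposite: strong scattering traps the particle inside $\Lambda$. Under the scaling of this paper the velocity decorrelates on times of order $\ep^{\lambda}$ (the collision operator carries the divergent factor $\ep^{-\lambda}$), so over a fixed kinetic window the position performs a diffusion with effective coefficient $O(\ep^{\lambda})$, the typical horizontal excursion is $O(\ep^{\lambda/2})\to 0$, and therefore $q\to 1$ as $\ep\to 0$: Banach's theorem at fixed $t$ gives no contraction uniform in $\ep$. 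Exits become order-one likely only at the diffusive horizon $t_0=\ep^{-\lambda}$, which is exactly why the paper's Neumann series \eqref{BNPPL:eq:gSN}, \eqref{BNPPL:eq:fSN} and all contraction estimates \eqref{BNPPL:S0bounded}, \eqref{BNPPL:F0bounded*} are formulated for the semigroups evaluated at time $\ep^{-\lambda}$. Moreover, even at that horizon the escape probability of the true mechanical flow is never estimated directly: the bound $\|F_\ep^0(\ep^{-\lambda})f_0\|_\infty\leq\beta'\|f_0\|_\infty$, $\beta'<1$, is inherited from the Markov semigroup $S_\ep^0(\ep^{-\lambda})$ (Proposition 2.1 of \cite{BNPP}) through the recollision estimate of Proposition \ref{BNPPL:prop:fepinhepin}. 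So the Markov approximation, which you defer to the convergence step, is a prerequisite for existence and uniqueness; the correct logical order inverts yours, and as written your existence proof has no foundation.

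The rest of your plan is structurally sound and close to the paper: Neumann-series representation, comparison of the stationary states term by term, identification of the limit of the Markovian stationary solution via Hilbert expansion (Proposition \ref{BNPPL:prop:hSrhoS}, proved in \cite{BNPP}, Section 4.2, where the harmonic profile \eqref{BNPPL:statheat} indeed makes $D$ drop out), and the $L^\infty$-to-$L^2$ upgrade on the bounded strip. Two smaller discrepancies are worth recording. First, the paper interposes the stationary linear Boltzmann solution $h_\ep^S$ between $f_\ep^S$ and $g_\ep^S$, with two separate error estimates: $C\ep^{\gamma-3\lambda}$ for the memory effects (Propositions \ref{BNPPL:th:propCIN} and \ref{BNPPL:prop:fepinhepin}) and $C\ep^{2(\alpha-\lambda)}$ for the grazing-collision limit (Propositions \ref{BNPPL:grazcoll1} and \ref{BNPPL:grazcollOUT}); your direct mechanical-to-Landau comparison conflates these, hiding the step in which the Landau operator is actually produced from the Boltzmann one. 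Second, Theorem \ref{BNPPL:th:MAIN1} requires only $\ep^{\gamma-3\lambda}\to 0$, i.e. $\lambda<\frac{1-8\alpha}{7}$; the stronger Assumption \ref{BNPPL:A1} that you invoke as ``precisely what forces the error to vanish'' is needed only for the Fick's law statement of Theorem \ref{BNPPL:th:MAIN2}.
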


Some remarks on the above Theorem are in order. 
The boundary conditions of the problem
depend on the space variable only through the horizontal component. As a consequence, the stationary solution $f_\ep^S$ of the microscopic problem, as well as the stationary solution $\varrho^S$ of the heat equation, inherits the same feature. This justifies the convergence in $L^{2}((0,L)\times S_1)$ instead of in $L^{2}(\Lambda\times S_1)$. 
The explicit expression for the stationary solution $\varrho^S$ reads 
\begin{equation}\label{BNPPL:statheat}
\varrho^S(x)= \frac{\rho_1(L-x_1)+\rho_2 x_1}{ L},
\end{equation}
where $x_1$ is the horizontal component of the space variable $x$. We note that in order to prove Theorem \ref{BNPPL:th:MAIN1} it is enough to assume that $\ep^{\gamma-3\lambda}\to 0,$ i.e. $\lambda<\frac{1-8\alpha}{7}.$ 
The stronger Assumption \ref{BNPPL:A1} is needed to prove Theorem \ref{BNPPL:th:MAIN2} below.\\\vspace{2mm}

Next, to discuss the Fick's law, we introduce the stationary mass flux 
\begin{equation}\label{BNPPL:def:j}
J_\ep^S(x)=\ep^{-\lambda}\int_{S_1}v\,f_\ep^S(x,v)\,dv,
\end{equation}
and the stationary mass density
\begin{equation}\label{BNPPL:def:mass}
\varrho_\ep^S(x)=\int_{S_1}f_\ep^S(x,v)\,dv.
\end{equation}
Note that $J_\ep^S$ is the total amount of mass flowing through a unit area in a unit time interval. Although in a stationary problem there is no typical time scale, the factor $\ep^{-\lambda}$ appearing in the definition of $J_\ep^S$, is reminiscent of the time scaling necessary to obtain a diffusive limit.

\begin{theorem}[\textbf{Fick's law}] \label{BNPPL:th:MAIN2}
We have 
\begin{equation}\label{BNPPL:FickL}
J_\ep^S+D\nabla_x\varrho_\ep^S\to 0
\end{equation}
as $\ep\to 0$. The convergence is in $\D'(0,L)$ and $D>0$ is given by the Green-Kubo formula 
\begin{equation}\label{BNPPL:GK} 
D=\frac{2}{\mu}\int_{S_1}{v\cdot\big(-\Delta_{|v|}^{-1}\big)v\,dv}.
\end{equation}
Moreover
\begin{equation}\label{BNPPL:Jlim}
J^S=\lim_{\ep\to 0}J_\ep^S(x),
\end{equation}
where the convergence is in $L^2(0,L)$
and 
\begin{equation}\label{BNPPL:FICK}
J^S=-D\,\nabla\varrho^S=-D\,\frac{\rho_2-\rho_1}{L},
\end{equation}
where $\varrho^S$ is the linear profile \eqref{BNPPL:statheat}.
\end{theorem}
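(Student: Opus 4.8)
The plan is to derive Fick's law by the moment method applied to the stationary kinetic equation, using the corrector associated with the linear Landau operator. The starting point is that, by the analysis underlying Theorem~\ref{BNPPL:th:MAIN1}, the stationary microscopic solution $f_\ep^S$ satisfies the stationary linear Landau equation
$$
v\cdot\nabla_x f_\ep^S=\ep^{-\lambda}\,\mathcal{L}f_\ep^S+E_\ep,
$$
where $\mathcal{L}$ is the self-adjoint, negative semidefinite Landau collision operator acting as a multiple of $\Delta_{|v|}$, whose kernel is the velocity-independent functions, and $E_\ep$ is a remainder controlled by the same estimates on pathological configurations that enter Theorem~\ref{BNPPL:th:MAIN1}. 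I will use the crucial spectral fact that on $S_1$ one has $\Delta_{|v|}v=-v$, so that $v$ is an eigenfunction of $\mathcal{L}$ and the corrector $\chi:=(-\mathcal{L})^{-1}v$ is an explicit multiple of $v$.

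First I would integrate the equation over $v\in S_1$. Since $\mathcal{L}$ conserves mass, $\int_{S_1}\mathcal{L}f_\ep^S\,dv=0$, and one obtains $\nabla_x\cdot\!\big(\int_{S_1}v f_\ep^S\,dv\big)=\int_{S_1}E_\ep\,dv$, i.e. $\nabla_x\cdot J_\ep^S\to0$. Because all the data depend on $x_1$ only, this forces the horizontal flux $J_{\ep,1}^S$ to be constant in $x_1$ up to a vanishing error; this near-constancy is what will eventually pin down the limiting value of $J^S$.

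Next, to produce the flux–gradient relation, I would test the kinetic equation against $\psi(x_1)\,\chi(v)$, with $\psi\in C_c^\infty(0,L)$ and $\chi=(-\mathcal{L})^{-1}v$, and integrate over $(0,L)\times S_1$. Using self-adjointness of $\mathcal{L}$ together with $\mathcal{L}\chi=-v$, the collision term collapses to $-\ep^{-\lambda}\int_{S_1}v f_\ep^S\,dv=-J_\ep^S$ tested against $\psi$, so the divergent prefactor $\ep^{-\lambda}$ is exactly absorbed into the definition of $J_\ep^S$. On the transport side, integrating by parts in $x$ moves the derivative onto $\psi$ and produces the second-velocity moment $\int_{S_1}\chi\otimes v\,f_\ep^S\,dv$, which reduces to leading order to a multiple of $\varrho_\ep^S\,I$ once $f_\ep^S$ is replaced by its velocity-isotropic part (legitimate by the $L^2$ convergence of Theorem~\ref{BNPPL:th:MAIN1}). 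This yields \eqref{BNPPL:FickL} in the form $J_\ep^S=-D\nabla_x\varrho_\ep^S+R_\ep$ in $\mathcal{D}'(0,L)$, and a direct computation of $\int_{S_1}\chi\otimes v\,dv$ using $\Delta_{|v|}v=-v$ identifies the constant with the Green--Kubo value \eqref{BNPPL:GK}.

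Finally, \eqref{BNPPL:Jlim}–\eqref{BNPPL:FICK} follow by combining the two steps: the near-constancy of $J_{\ep,1}^S$ in $x_1$ and the convergence $\nabla_x\varrho_\ep^S\to\nabla\varrho^S=\frac{\rho_2-\rho_1}{L}$ (constant, from \eqref{BNPPL:statheat}) force the limit flux to be the constant $-D\,\frac{\rho_2-\rho_1}{L}$; since the limit is constant and the spatial fluctuations of $J_\ep^S$ vanish, the $\mathcal{D}'$ statement can be upgraded to convergence in $L^2(0,L)$. I expect the main obstacle to be the control of the remainder $R_\ep$ in the flux identity: because $J_\ep^S$ carries the divergent factor $\ep^{-\lambda}$, every error term is amplified by $\ep^{-\lambda}$, so one needs the kinetic approximation to be smaller by an extra power of $\ep^\lambda$ than what suffices for the density in Theorem~\ref{BNPPL:th:MAIN1}. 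This is precisely the role of the stronger Assumption~\ref{BNPPL:A1} ($\ep^{\gamma-4\lambda}\to0$) as opposed to the weaker condition $\ep^{\gamma-3\lambda}\to0$ needed for Theorem~\ref{BNPPL:th:MAIN1}; verifying that the amplified remainder still vanishes is the technical heart of the argument.
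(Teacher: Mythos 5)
Your argument is correct in substance, but it takes a genuinely different route from the paper. The paper never tests the kinetic equation against a corrector: it imports from \cite{BNPP}, Section 4.2, the Hilbert expansion of the stationary Landau solution, $g_\ep^S=\varrho^S+\ep^{\lambda}g^{(1)}+\ep^{\lambda}R_\ep$ with $g^{(1)}=\LL^{-1}(v\cdot\nabla_x\varrho^S)=\frac{\rho_2-\rho_1}{L}\LL^{-1}(v_1)$ and $R_\ep=O(\ep^{\lambda/2})$ in $L^2$, computes the flux $\ep^{-\lambda}\int_{S_1}v\,g_\ep^S\,dv$ directly from this expansion (the leading term drops since $\int_{S_1}v\,\varrho^S\,dv=0$, the corrector term produces the Green--Kubo constant, the remainder contributes $O(\ep^{\lambda/2})$), and only then transfers to the mechanical solution via Proposition \ref{BNPPL:prop:fShS}: $\ep^{-\lambda}\|f_\ep^S-g_\ep^S\|_\infty\le C\ep^{\gamma-4\lambda}\to 0$ under Assumption \ref{BNPPL:A1}. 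Your weak formulation --- testing against $\psi(x_1)\chi(v)$ with $\chi=(-\LL)^{-1}v$, which by your correct spectral identity $\Delta_{|v|}v=-v$ on $S_1$ is an explicit multiple of $v$ --- reproduces the same algebra distributionally: self-adjointness of $\LL$ absorbs the factor $\ep^{-\lambda}$ into the definition of the flux, and the transported second moment $\int_{S_1}\chi\otimes v\,g_\ep^S\,dv$ converges to a multiple of $\varrho^S$ times the identity, which is identified with the Green--Kubo constant. What your route buys: it needs nothing beyond the $L^2$ convergence of Theorem \ref{BNPPL:th:MAIN1} (no quantitative remainder bound), and your observation that integrating the Landau equation in $v$ makes the horizontal flux exactly constant in $x_1$ gives a clean upgrade from $\mathcal{D}'(0,L)$ to $L^2(0,L)$. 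What the paper's route buys: the $L^2$ statement \eqref{BNPPL:Jlim} with an explicit rate $O(\ep^{\lambda/2})$ comes out immediately from the expansion, with no separate constancy argument. Your diagnosis of why the stronger Assumption \ref{BNPPL:A1} is needed is exactly the paper's: the flux carries the factor $\ep^{-\lambda}$, and $\ep^{-\lambda}\cdot\ep^{\gamma-3\lambda}=\ep^{\gamma-4\lambda}$.

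One caveat deserves flagging. Your starting point --- that $f_\ep^S$ itself satisfies the stationary Landau equation up to a remainder $E_\ep$ ``controlled by the same estimates on pathological configurations'' --- is not available in the paper and would not be easy to establish. Propositions \ref{BNPPL:th:propCIN}, \ref{BNPPL:prop:fepinhepin} and \ref{BNPPL:prop:fShS} compare \emph{solutions} in $L^\infty$; they do not produce a residual for $f_\ep^S$ inserted into the equation, and controlling such a residual would require regularity of $f_\ep^S$ (e.g.\ making sense of $v\cdot\nabla_x f_\ep^S$) that sup-norm closeness to $g_\ep^S$ does not provide. The fix is minor but necessary: run both of your moment identities (the mass-conservation step and the $\chi$-duality step) on $g_\ep^S$, for which they hold exactly with no remainder, and transfer only the two resulting moments to $f_\ep^S$ at the very end using $\|f_\ep^S-g_\ep^S\|_\infty\le C\ep^{\gamma-3\lambda}$ --- the density moment costs nothing, and the flux moment costs precisely the factor $\ep^{-\lambda}$ you already budgeted for via Assumption \ref{BNPPL:A1}. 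With that rearrangement your proof is complete.
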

\vspace{4mm}
Observe that, as expected by physical arguments, the stationary flux $J^S$ does not depend on the space variable. Furthermore the diffusion coefficient $D$ is determined by the behavior of the system at equilibrium and in particular it is equal to the diffusion coefficient for the time dependent problem.

%
%

\section{Proofs}\label{BNPPL:proofs}
In order to prove Theorem \ref{BNPPL:th:MAIN1} our strategy is the following. 
We introduce the stationary linear Landau equation 
\begin{equation}\label{BNPPL:eq:LandauS}
\left\{\begin{array}{ll}
\big(v\cdot\nabla_x\big)g_\ep^S(x,v)=\ep^{-\lambda}\,\LL  g_\ep^S(x,v),
&\vspace{0.2cm}\\
g_\ep^S(x,v)=\rho_1,\ \ \ \ \ x\in \{0\}\times \R,\quad v_1>0, 
&\vspace{0.2cm}\\
g_\ep^S(x,v)=\rho_2,\ \ \ \ \ x\in \{L\}\times \R,\quad v_1<0, \
&
\end{array}\right.
\end{equation}
where $\LL=\frac{\mu}{2}\Delta_{|v|}$ and $\Delta_{|v|}$ is the Laplace Beltrami operator on the circle of radius ${|v|=1}$, namely $S_{1}$.
Moreover we introduce the stationary linear Boltzmann equation 
\begin{equation}\label{BNPPL:eq:Boltz2}
\left\{\begin{array}{ll}
\big(v\cdot\nabla_x\big)h_\ep^S(x,v)=\ep^{-\lambda}\,  \text{L}_\ep h_\ep^S(x,v),
&\vspace{0.2cm}\\
h_\ep^S(x,v)=\rho_1,\ \ \ \ \ x\in \{0\}\times \R,\quad v_1>0, 
&\vspace{0.2cm}\\
h_\ep^S(x,v)=\rho_2,\ \ \ \ \ x\in \{L\}\times \R,\quad v_1<0,\  
&
\end{array}\right.
\end{equation}
where $ \text{L}_\ep:=\ep^{-2\alpha} \text{L}$ and $\text{L}$ is the linear Boltzmann operator defined as 
\begin{equation}\label{BNPPL:def:L_ve}
\text{L} f (v)=\mu\int_{-1}^1d\rho\big[f(v')-f(v)\big],\qquad  f\in L^1(S_1)
\end{equation}
with
\begin{equation}\label{BNPPL:scattering}
v'=v-2(\omega\cdot v)\omega
\end{equation}
and $\omega$ is the unit vector bisecting the angle between the incoming velocity $v$ and the outgoing velocity $v'$ as specified in Figure \ref{BNPPL_F1}. 
 
\begin{figure}[ht]
\centering
\includegraphics[scale= 0.15]{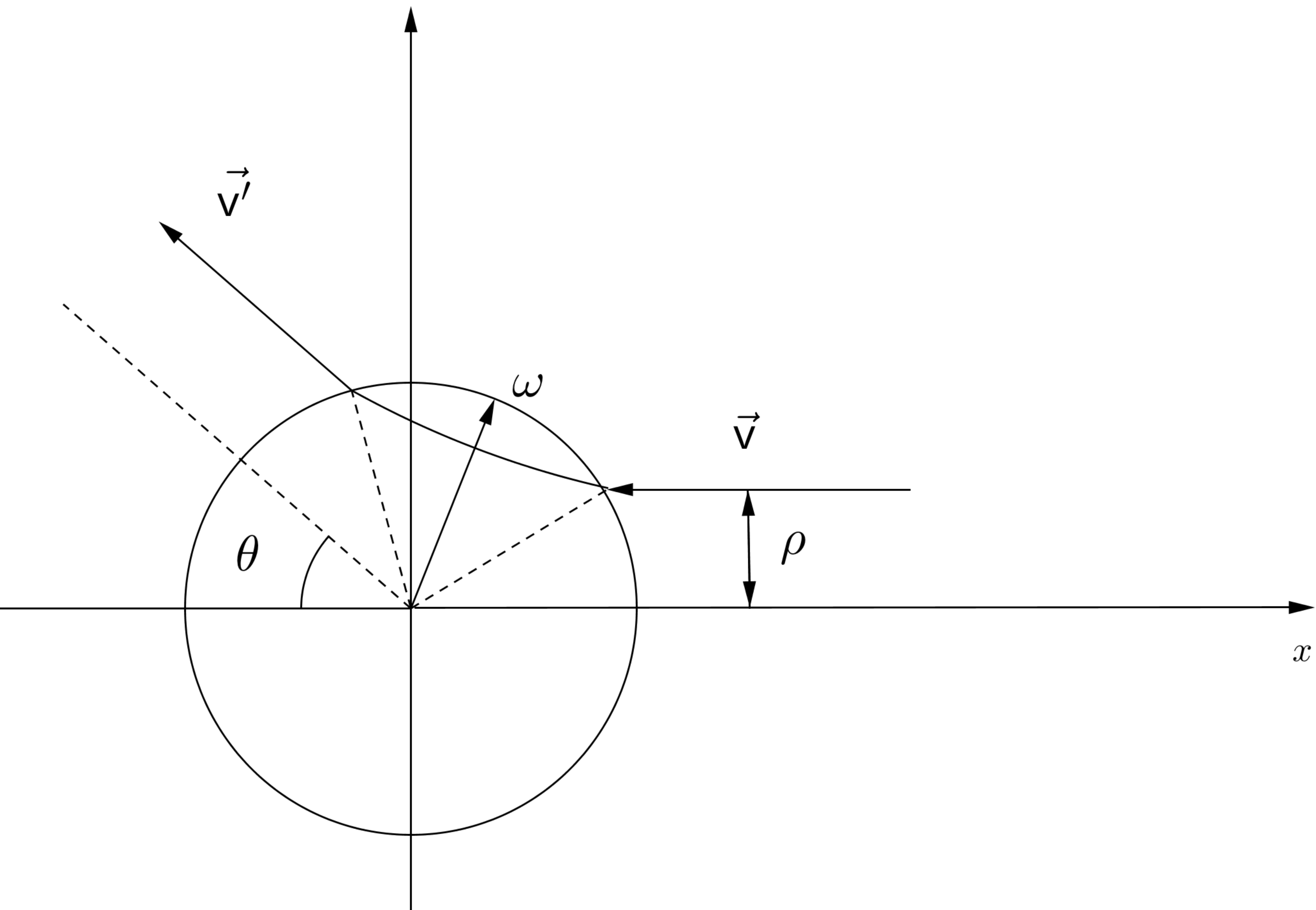}
\caption{The scattering problem}
\label{BNPPL_F1}
\end{figure}

Since the boundary conditions depend on the space variable only trough the horizontal component, the stationary solution $h_\ep^S$ and $g_\ep^S$ inherit the same feature, as well as $f_\ep^S$ and $\varrho^{S}$. 

The strategy of the proof consists of two steps. First we prove that there exists a unique $g_\ep^S$ which converges, as $\ep\to 0$, to $\varrho^S$ given by \eqref{BNPPL:statheat}. See Proposition \ref{BNPPL:prop:hSrhoS} below. Secondly we show that there exists a unique $f_{\ep}^S$ asymptotically equivalent to $g_\ep^S$. See Proposition \ref{BNPPL:prop:fShS} below. This result is achieved using two steps. The first one concerns the convergence of $f_{\ep}^S$ towards $h_{\ep}^S$, the stationary solution of the linear Boltzmann equation, by showing that the memory effects of the mechanical system, preventing the Markovianity, are indeed negligible.
The second one concerns the grazing collision limit which guarantees the asymptotic equivalence of $h_{\ep}^S$ and $g_{\ep}^S$. %

\vspace{2mm}
Let $g_\ep$ be the solution of the  
problem
\begin{equation}\label{BNPPL:eq:Landau1}
\left\{\begin{array}{ll}
\big(\partial_t +v\cdot\nabla_x\big)g_\ep(x,v,t)=\ep^{-\lambda}\, \LL g_\ep(x,v,t),
&\vspace{0.2cm}\\
g_\ep(x,v,0)=f_0(x,v), \ \ \ \ \ \ f_0\in L^{\infty}(\Lambda\times S_1),&\vspace{0.2cm}\\
g_\ep(x,v,t)=\rho_1,\ \ \ \ \ x\in \{0\}\times \R,\quad v_1>0,\ \ \ t\geq 0,\  
&\vspace{0.2cm}\\
g_\ep(x,v,t)=\rho_2,\ \ \ \ \ x\in \{L\}\times \R,\quad v_1<0,\ \ \ t\geq 0.\  
&
\end{array}\right.
\end{equation}

We can write $g_{\ep}(t)$ as the sum of two contributions, one due to the backward trajectories hitting the boundary and the other one due to the trajectories which never leave $\Lambda$.
Therefore we set
\begin{equation*}\label{BNPPL:def:hep}
g_{\ep}(x,v,t)=g_{\ep}^{out}(x,v,t)+g_{\ep}^{in}(x,v,t).
\end{equation*} 
Observe that
$g_{\ep}^{out}$ solves 
\begin{equation}\label{BNPPL:eq:Landau}
\left\{\begin{array}{ll}
\big(\partial_t +v\cdot\nabla_x\big)g_\ep^{out}(x,v,t)=\ep^{-\lambda}\, \LL g_\ep^{out}(x,v,t),
&\vspace{0.2cm}\\
g_\ep^{out}(x,v,0)=0, \ \ \ \ \ \ x\in \Lambda,&\vspace{0.2cm}\\
g_\ep^{out}(x,v,t)=\rho_1,\ \ \ \ \ x\in \{0\}\times \R,\quad v_1>0,\ \ \ t\geq 0,\  
&\vspace{0.2cm}\\
g_\ep^{out}(x,v,t)=\rho_2,\ \ \ \ \ x\in \{L\}\times \R,\quad v_1<0,\ \ \ t\geq 0.\  
&
\end{array}\right.
\end{equation}

\vspace{2mm}
We set $\tilde{\LL}:=\ep^{-\lambda}\, \LL-v\cdot \nabla_x $. Let $G_\ep^0(t)$ be the semigroup whose generator is the operator $\tilde{\LL}$, i.e. $G_\ep^0(t)=e^{t\tilde{\LL}}$. Hence
$$
g_{\ep}^{in}(t)=G_\ep^0(t)f_0.
$$ 
We observe that $g_\ep^S$, solution of \eqref{BNPPL:eq:LandauS}, satysfies, for $t_0>0$
\begin{equation*}
g_\ep^S=g_{\ep}^{out}(t_0)+G_\ep^0(t_0)g_\ep^S,
\end{equation*}
so that we can formally express $g_\ep^S$ as the Neumann series
\begin{equation}\label{BNPPL:eq:gSN}
g_\ep^S=\sum_{n\geq 0}(G_\ep^0(t_0))^n g_\ep^{out}(t_0).
\end{equation}
We now establish existence and uniqueness of $g_\ep^S$ by showing 
that the Neumann series \eqref{BNPPL:eq:gSN} converges. 
In order to do it we extend the action of the semigroup $G_\ep^0(t)$ to the space $L^\infty(\R^2\times S_1)$, namely
$$G_\ep^0(t)\ell_0(x,v)=\chi_{\Lambda}(x) \tilde{G}_\ep^0(t)\ell_0(x,v)$$
for any $\ell_0(x,v)\in L^\infty(\R^2\times S_1).$ Here $\chi_{\Lambda}$ is the characteristic function of $\Lambda$ and $\tilde{G}_\ep^0$ is the extension of the semigroup to the whole space $\R^2\times S_1$. For the sake of simplicity from now on we set $\tilde{G}_\ep^0:=G_\ep^0.$ 

As we proved in \cite{BNPP}, the same technique works for $h_{\ep}$, solution of the following Boltzmann equation \vspace{2mm}
\begin{equation}\label{BNPPL:eq:Boltz1}
\left\{\begin{array}{ll}
\big(\partial_t +v\cdot\nabla_x\big)h_\ep(x,v,t)=\ep^{-\lambda}\, \text{L}_\ep h_\ep(x,v,t),
&\vspace{0.2cm}\\
h_\ep(x,v,0)=f_0(x,v), \ \ \ \ \ \ f_0\in L^{\infty}(\Lambda\times S_1),&\vspace{0.2cm}\\
h_\ep(x,v,t)=\rho_1,\ \ \ \ \ x\in \{0\}\times \R,\quad v_1>0,\ \ \ t\geq 0,\  
&\vspace{0.2cm}\\
h_\ep(x,v,t)=\rho_2,\ \ \ \ \ x\in \{L\}\times \R,\quad v_1<0,\ \ \ t\geq 0.\  
&
\end{array}\right.
\end{equation}

\vspace{3mm}
\noindent The solution $h_\ep$ of the problem \eqref{BNPPL:eq:Boltz1} has the following explicit representation
\vspace{1mm}
\begin{equation}
\begin{split}
\label{BNPPL:eq:heps}
h_{\ep}(x,v,t)&= \sum_{N\geq 0} \left(\mu_\ep\ep\right)^{N}\int_{0}^{t}dt_1\dots\int_{0}^{t_{N-1}}dt_N\\&
\int_{-1}^{1}d\rho_1\dots\int_{-1}^{1}d\rho_N \, \chi(\tau<t_N)\chi(\tau>0)\, e^{-2\mu_\ep\ep \, (t-\tau)}\, f_{B}(\gamma^{-(t-\tau)}(x,v))+\\&
+\sum_{N\geq 0} e^{-2\mu_\ep\ep \, t}\left(\mu_\ep\ep\right)^{N}\int_{0}^{t}dt_1\dots\int_{0}^{t_{N-1}}dt_N\\&
\int_{-1}^{1}d\rho_1\dots\int_{-1}^{1}d\rho_N\, \chi(\tau=0)\, f_0(\gamma^{-t}(x,v)),
\end{split}
\end{equation}
with $f_B$ defined in \eqref{BNPPL:def:fB}. Given $x,v,\,t_1\dots t_N,\,\rho_1\dots\rho_N$, $\gamma^{-t}(x,v)$ denotes the trajectory whose position and velocity are
$$(x-v(t-t_1)-v_1(t_1-t_2)\dots-v_Nt_N,v_N).$$
The transitions $v\to v_1\to v_2\dots\to v_N$ are obtained by means of a scattering with an hard disk with impact parameter $\rho_i$ via \eqref{BNPPL:scattering}.
As before $t-\tau$, $\tau=\tau(x,v,t_1\dots,t_N,\rho_1\dots \rho_N)$, is the first (backward) hitting time with the boundary. We remind that $\mu_\ep\ep=\mu \ep^{-2\alpha-\lambda} $. 

We set
\begin{equation*}\label{BNPPL:def:hep}
h_{\ep}(x,v,t)=h_{\ep}^{out}(x,v,t)+h_{\ep}^{in}(x,v,t).
\end{equation*} 
Observe that
$h_{\ep}^{out}$ solves 
\begin{equation}\label{BNPPL:eq:Boltz}
\left\{\begin{array}{ll}
\big(\partial_t +v\cdot\nabla_x\big)h_\ep^{out}(x,v,t)=\ep^{-\lambda}\, \text{L}_\ep h_\ep^{out}(x,v,t),
&\vspace{0.2cm}\\
h_\ep^{out}(x,v,0)=0, \ \ \ \ \ \ x\in \Lambda,&\vspace{0.2cm}\\
h_\ep^{out}(x,v,t)=\rho_1,\ \ \ \ \ x\in \{0\}\times \R,\quad v_1>0,\ \ \ t\geq 0,\  
&\vspace{0.2cm}\\
h_\ep^{out}(x,v,t)=\rho_2,\ \ \ \ \ x\in \{L\}\times \R,\quad v_1<0,\ \ \ t\geq 0.\  
&
\end{array}\right.
\end{equation}
Let $S_\ep^0(t)$ be the Markov semigroup associated to the second sum in \eqref{BNPPL:eq:heps}, hence $h_{\ep}^{in}(t)=S_\ep^0(t)f_0$. 
Moreover $h_\ep^S$, solution of \eqref{BNPPL:eq:Boltz2}, satysfies, for $t_0>0$
\begin{equation*}
h_\ep^S=h_{\ep}^{out}(t_0)+S_\ep^0(t_0)h_\ep^S,
\end{equation*}
so that we can formally express $h_\ep^S$ as the Neumann series
\begin{equation}\label{BNPPL:eq:gSN}
h_\ep^S=\sum_{n\geq 0}(S_\ep^0(t_0))^n h_\ep^{out}(t_0).
\end{equation}

\begin{proposition}\label{BNPPL:prop:existencegepS}
There exists $\ep_0>0$ 
such that for any $\ep<\ep_0$ and for any $\ell_0\in L^\infty(\R^2\times S_1)$ we have
\begin{equation}\label{BNPPL:S0bounded}
||G_\ep^0(\ep^{-\lambda})\ell_0 ||_\infty\leq \tilde{\beta}\, ||\ell_0||_\infty,\ \ \  \tilde{\beta}<1.
\end{equation}
As a consequence there exists a unique stationary solution $g_\ep^S\in L^\infty (\Lambda\times S_1)$ satisfying (\ref{BNPPL:eq:LandauS}).
\end{proposition}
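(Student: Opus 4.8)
The plan is to read $G_\ep^0(t)$ through its probabilistic representation and to exploit the absorption at $\partial\Lambda$ to produce a \emph{strict} contraction at the macroscopic time $t=\ep^{-\lambda}$. Concretely, the generator $\tilde{\LL}=\ep^{-\lambda}\LL-v\cdot\nabla_x$ governs the Markov process $(\Xi_s,W_s)$ in which the velocity $W_s\in S_1$ performs a Brownian motion on the circle with generator $\ep^{-\lambda}\frac{\mu}{2}\Delta_{|v|}$, while the position obeys $\dot\Xi_s=-W_s$; the multiplication by $\chi_\Lambda$ in the definition of $G_\ep^0$ encodes the killing of the trajectory as soon as its horizontal coordinate leaves $(0,L)$. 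Thus
$$
G_\ep^0(t)\ell_0(x,v)=\EE_{(x,v)}\big[\ell_0(\Xi_t,W_t)\,\chi(\Xi_s\in\Lambda\ \forall\,s\in[0,t])\big],
$$
and, since the underlying semigroup $\tilde G_\ep^0$ is Markov (positivity and mass preserving), the first, pointwise bound is immediate:
$$
\|G_\ep^0(t)\ell_0\|_\infty\le \|\ell_0\|_\infty\,\sup_{(x,v)\in\Lambda\times S_1}\PP_{(x,v)}\big(\Xi_s\in\Lambda\ \forall\,s\in[0,t]\big).
$$
Hence (\ref{BNPPL:S0bounded}) reduces to showing that the survival probability $\PP_{(x,v)}(\tau_\Lambda>\ep^{-\lambda})$, with $\tau_\Lambda$ the exit time of the horizontal coordinate from $(0,L)$, is bounded by some $\tilde\beta<1$, uniformly in the starting point and in $\ep<\ep_0$.

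The second and central step is this survival estimate, for which I would use the diffusive nature of the horizontal motion at the scale $\ep^{-\lambda}$. The horizontal displacement is the additive functional $\Xi^{(1)}_t-x_1=-\int_0^t\cos\theta_s\,ds$ of the velocity diffusion $\theta_s$, whose first nontrivial mode decays at rate $\frac{\mu}{2}\ep^{-\lambda}$; consequently the velocity autocorrelation time is $O(\ep^{\lambda})$ and the variance of the displacement at $t=\ep^{-\lambda}$ converges to a finite, strictly positive constant of order one (the Green--Kubo value). In macroscopic variables $s=\ep^{\lambda}\sigma$ the process $\Xi^{(1)}$ therefore converges, as $\ep\to0$, to a Brownian motion with the diffusion coefficient $D$ of (\ref{BNPPL:GK}), run for a macroscopic time of order one. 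Since such a Brownian motion exits the bounded interval $(0,L)$ before that time with a probability bounded away from $0$ uniformly in the starting point, the survival probability is bounded away from $1$; transferring this lower bound on the exit probability to the prelimit — for which it suffices to exhibit a uniform positive lower bound on the probability of crossing the strip, so that no sharp invariance principle is actually needed — yields the desired $\tilde\beta<1$, uniform for $\ep<\ep_0$. I expect this step to be the main obstacle, the delicate point being to make the exit estimate uniform in both the starting point $(x,v)$ and in $\ep$, controlling in particular initial velocities nearly tangent to the strip.

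With (\ref{BNPPL:S0bounded}) in hand, existence and uniqueness follow by a standard fixed-point argument. Taking $t_0=\ep^{-\lambda}$ in the relation $g_\ep^S=g_\ep^{out}(t_0)+G_\ep^0(t_0)g_\ep^S$, the affine map $g\mapsto g_\ep^{out}(t_0)+G_\ep^0(t_0)g$ is a contraction of ratio $\tilde\beta<1$ on $L^\infty(\Lambda\times S_1)$, hence it admits a unique fixed point given by the convergent Neumann series (\ref{BNPPL:eq:gSN}),
$$
g_\ep^S=\sum_{n\ge0}\big(G_\ep^0(\ep^{-\lambda})\big)^n g_\ep^{out}(\ep^{-\lambda}),\qquad \|g_\ep^S\|_\infty\le\frac{\|g_\ep^{out}(\ep^{-\lambda})\|_\infty}{1-\tilde\beta}.
$$
Here $g_\ep^{out}(t_0)$ is bounded, with $\|g_\ep^{out}(t_0)\|_\infty\le\max(\rho_1,\rho_2)$, because it solves (\ref{BNPPL:eq:Landau}) with zero initial datum and boundary data $\rho_1,\rho_2$ and the absorbed Landau semigroup obeys the maximum principle. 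This produces the unique $L^\infty$ stationary solution $g_\ep^S$ of (\ref{BNPPL:eq:LandauS}) and completes the proof.
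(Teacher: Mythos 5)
Your overall architecture---a strict one-step contraction for $G_\ep^0(\ep^{-\lambda})$, followed by convergence of the Neumann series \eqref{BNPPL:eq:gSN} and the maximum-principle bound $\|g_\ep^{out}(\ep^{-\lambda})\|_\infty\leq\rho_2$---coincides with the paper's, but you prove the key estimate \eqref{BNPPL:S0bounded} by a genuinely different route. The paper never touches the Landau process probabilistically: it writes $\|G_\ep^0(\ep^{-\lambda})\ell_0\|_\infty\leq\|(G_\ep^0(\ep^{-\lambda})-S_\ep^0(\ep^{-\lambda}))\ell_0\|_\infty+\|S_\ep^0(\ep^{-\lambda})\ell_0\|_\infty$, imports the contraction $\|S_\ep^0(\ep^{-\lambda})\ell_0\|_\infty\leq\beta\|\ell_0\|_\infty$, $\beta<1$, for the \emph{Boltzmann} semigroup from Proposition 2.1 of \cite{BNPP}, and absorbs the difference via the grazing-collision comparison of Proposition \ref{BNPPL:grazcoll1}, obtaining $\tilde\beta=\beta+C\ep^{2(\alpha-\lambda)}<1$ for $\ep$ small. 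All of the hard probabilistic content (your ``survival probability'' step) is thus hidden inside the cited result of \cite{BNPP}; the present paper only adds a vanishing perturbation that it has already established. A completed direct argument like yours would have the merit of being self-contained and of not depending on the sign of $\alpha-\lambda$ in the comparison estimate, but it re-proves from scratch what the paper deliberately transfers.

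The genuine gap in your version sits exactly at your central step, and it is not as soft as you suggest. The assertion that ``no sharp invariance principle is actually needed'' because ``it suffices to exhibit a uniform positive lower bound on the probability of crossing the strip'' is circular: that lower bound \emph{is} the whole difficulty, and you never exhibit it. Your Green--Kubo variance computation (variance of order $2/\mu$ at $t=\ep^{-\lambda}$) gives no lower bound on $\PP_{(x,v)}\big(\sup_{s\leq t}|\Xi^{(1)}_s-x_1|\geq L\big)$: second moments yield upper tails via Chebyshev, not anti-concentration, and if $2/\mu<L^2$ even a Paley--Zygmund argument at threshold $L$ fails outright. Nor is a ballistic shortcut available: since the angular diffusion runs at rate $\ep^{-\lambda}$, the probability that the velocity stays in a fixed cone long enough to cross the strip is $O(e^{-c\,\ep^{-\lambda}})$, so---unlike in a low-density picture with an $O(1)$ probability of a long free flight---some genuinely diffusive lower bound is unavoidable: a quantitative CLT (or functional limit with error control) for the additive functional $\int_0^t\cos\theta_s\,ds$, uniform in the initial datum, using that the angle decorrelates on times $O(\ep^{\lambda})$ so that the initial $(x,v)$ is forgotten after a negligible burn-in. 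Until that estimate is supplied, \eqref{BNPPL:S0bounded} is not proved. Everything else in your proposal---the killed-process representation of $G_\ep^0$, the reduction of \eqref{BNPPL:S0bounded} to a survival probability, and the fixed-point conclusion with $\|g_\ep^S\|_\infty\leq(1-\tilde\beta)^{-1}\rho_2$---is sound and matches the paper's closing step essentially verbatim.
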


To prove Proposition \ref{BNPPL:prop:existencegepS} we also need the following result
\begin{proposition}\label{BNPPL:grazcoll1}
For every $\ell_0\in L^\infty(\R^2\times S_1)$ 
 \begin{equation}\label{BNPPL:eq:gepinsepin}
||\left(G_\ep^0(\ep^{-\lambda}t)-S_\ep^0(\ep^{-\lambda}t)\right)\ell_0 ||_\infty\leq C\ep^{2(\alpha-\lambda)}.
\end{equation}
\end{proposition}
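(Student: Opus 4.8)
The plan is to compare the two semigroups through the Duhamel (telescoping) identity, to reduce their difference to the difference of the two collision operators acting on a single fixed profile, and then to quantify that mismatch by a grazing–collision expansion. Since $G_\ep^0(t)=e^{t\tilde{\LL}}$ with $\tilde{\LL}=\ep^{-\lambda}\LL-v\cdot\nabla_x$ and $S_\ep^0$ is the Markov semigroup generated by $B:=\ep^{-\lambda}\mathrm{L}_\ep-v\cdot\nabla_x$, the two generators share the transport part, so that their difference is the purely collisional operator $\ep^{-\lambda}(\LL-\mathrm{L}_\ep)$. Writing $T:=\ep^{-\lambda}t$ and differentiating $s\mapsto G_\ep^0(T-s)S_\ep^0(s)\ell_0$ gives
\[
\big(G_\ep^0(T)-S_\ep^0(T)\big)\ell_0=\ep^{-\lambda}\int_0^{T}G_\ep^0(T-s)\,(\LL-\mathrm{L}_\ep)\,S_\ep^0(s)\ell_0\,ds .
\]
Both semigroups are $L^\infty$-contractions (being Markovian), whence
\[
\big\|\big(G_\ep^0(T)-S_\ep^0(T)\big)\ell_0\big\|_\infty\le \ep^{-\lambda}\int_0^{T}\big\|(\LL-\mathrm{L}_\ep)\,S_\ep^0(s)\ell_0\big\|_\infty\,ds ,
\]
so everything reduces to a pointwise-in-time bound on the collision-operator mismatch applied to the Boltzmann flow.

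Next comes the grazing-collision estimate. Parametrizing $S_1$ by the angle $\phi$ so that $\Delta_{|v|}=\partial_\phi^2$, the velocity jump $v\mapsto v'$ produced by the weak potential $\ep^\alpha\phi$ at impact parameter $\rho$ is a rotation by a small deflection angle $\theta_\ep(\rho)=O(\ep^\alpha)$, odd in $\rho$ and normalized through the cross-section so that $\ep^{-2\alpha}\int_{-1}^1\theta_\ep(\rho)^2\,d\rho\to 1$. For $g\in W^{4,\infty}(S_1)$ I Taylor-expand $g(v')-g(v)=g(\phi+\theta_\ep)-g(\phi)$ to fourth order: the odd powers $\theta_\ep,\theta_\ep^3$ integrate to zero in $\rho$ by parity, the $\theta_\ep^2$-term reconstructs exactly $\tfrac{\mu}{2}\partial_\phi^2 g=\LL g$ after multiplication by $\ep^{-2\alpha}$, and the remainder is controlled by $\ep^{-2\alpha}\|g^{(4)}\|_\infty\int|\theta_\ep|^4\,d\rho=O(\ep^{2\alpha})$. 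Using in addition that the subleading correction to $\ep^{-2\alpha}\int\theta_\ep^2$ is itself $O(\ep^{2\alpha})$ — again a parity feature of the weak-coupling expansion of the deflection angle — I obtain the operator bound
\[
\big\|(\LL-\mathrm{L}_\ep)g\big\|_\infty\le C\,\ep^{2\alpha}\,\|g\|_{W^{4,\infty}(S_1)} .
\]

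Combining the two steps yields
\[
\big\|\big(G_\ep^0(T)-S_\ep^0(T)\big)\ell_0\big\|_\infty\le C\,\ep^{-\lambda}\ep^{2\alpha}\int_0^{T}\big\|S_\ep^0(s)\ell_0\big\|_{W^{4,\infty}(S_1)}\,ds ,
\]
and, since $T=\ep^{-\lambda}t$, the two factors $\ep^{-\lambda}$ combine with $\ep^{2\alpha}$ to produce precisely the exponent $2(\alpha-\lambda)$ — provided the velocity-Sobolev norm of $S_\ep^0(s)\ell_0$ is bounded by $C\|\ell_0\|_\infty$ uniformly in $\ep$ and integrably in $s$. This uniform velocity-regularity is the main obstacle. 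The difficulty is twofold: the datum $\ell_0$ is merely bounded, so no derivatives are available at $s=0$, and $\mathrm{L}_\ep$ is neither bounded on $W^{4,\infty}$ with $\ep$-uniform constants nor classically smoothing. I would overcome it by exploiting the very high collision frequency $\ep^{-\lambda-2\alpha}$: each collision averages over the impact parameter, hence convolves the angular profile with a bump, so after $O(1)$ collisions — that is, already for $s\gtrsim\ep^{\lambda+2\alpha}$ — the profile is four-times differentiable with controlled norms, the thin initial layer $s\lesssim\ep^{\lambda+2\alpha}$ contributing negligibly to the time integral. An equivalent route, which avoids differentiating the Boltzmann flow altogether, is to integrate by parts in $v$ and transfer the four angular derivatives onto the outer factor $G_\ep^0(T-s)$, a genuine Landau diffusion semigroup in velocity and therefore parabolically smoothing; there the price is a quantitative smoothing estimate for $G_\ep^0$ carrying the singular rate $\ep^{-\lambda}$, whose uniformity is again controlled by the scaling relation between $\alpha$ and $\lambda$. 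Either way, the remaining work is the routine bookkeeping turning the last display into $C\,\ep^{2(\alpha-\lambda)}$ (with $C$ absorbing the fixed $t$).
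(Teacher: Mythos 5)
Your overall skeleton --- a Duhamel identity reducing the semigroup difference to a time integral of $(\LL-\mathrm{L}_\ep)$ applied to a single profile, a fourth-order grazing Taylor expansion with the odd terms killed by parity, the resulting $O(\ep^{2\alpha})$ operator mismatch, and the factor $\ep^{-\lambda}\cdot\ep^{-\lambda}t$ producing the exponent $2(\alpha-\lambda)$ --- is exactly the paper's. But you have oriented the Duhamel formula the wrong way round, and this creates the gap you yourself flag as ``the main obstacle'' without closing it. In your arrangement the mismatch $(\LL-\mathrm{L}_\ep)$ falls on the Boltzmann flow $S_\ep^0(s)\ell_0$, so you need $\ep$-uniform $W^{4,\infty}(S_1)$ bounds for a jump semigroup started from merely bounded data. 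That is not routine, and at the relevant scales it very likely fails with the uniformity required: in the explicit collision series the only mechanism producing an angular derivative is integration by parts in an impact parameter through the change of variables $\rho\mapsto\theta_\ep(\rho)$, and since $\theta_\ep=O(\ep^\alpha)$ each derivative costs a factor $\ep^{-\alpha}$; four derivatives cost $\ep^{-4\alpha}$, which overwhelms the $\ep^{2\alpha}$ Taylor-remainder gain and turns your source bound into a divergent $O(\ep^{-2\alpha})$. Beating this would require a quantitative CLT/hypoelliptic regularization for the kinetic jump process --- and in \emph{both} variables, since transport couples them: $v$-derivatives of $\ell_0(x-v(t-t_1)-\dots)$ generate $x$-derivatives of $\ell_0$ multiplied by large times, and $\ell_0$ is only $L^\infty$ in $x$ as well, so your picture of ``each collision convolves the angular profile with a bump'' ignores the spatial argument entirely. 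Your alternative route (transferring the four derivatives onto $G_\ep^0(T-s)$ by duality) faces the same transport coupling plus an integrability question as $s\to T$, and is again deferred as ``routine bookkeeping'' when it is precisely the crux.

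The paper sidesteps all of this by swapping the roles of the two flows. It writes the equation for $w_\ep=h_\ep^{in}-g_\ep^{in}$ with the source $A_\ep=\ep^{-2\lambda}\big(\tilde{\mathrm{L}}_\ep-\LL\big)g_\ep^{in}$, i.e.\ the grazing expansion is performed on the \emph{Landau} solution $g_\ep^{in}$, whose velocity regularity comes for free from the parabolic generator $\frac{\mu}{2}\Delta_{|v|}$; this yields $\|(\tilde{\mathrm{L}}_\ep-\LL)g_\ep^{in}\|_\infty\le C\ep^{2\alpha}$. The error is then propagated by the Boltzmann semigroup, used only as an $L^\infty$ contraction (Markov property), so no smoothing of the jump dynamics is ever needed: $w_\ep(t)=\int_0^t \tilde S_\ep(t-s)A_\ep(s)\,ds$ gives directly $\|w_\ep(t)\|_\infty\le T\|A_\ep\|_\infty\le C\ep^{2(\alpha-\lambda)}$. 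The fix to your proposal is therefore structural rather than technical: keep your Taylor expansion verbatim, but apply it to $g_\ep^{in}$ and run Duhamel with $S_\ep^0$ on the outside.
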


\begin{proof}
We look at the evolution of $h_\ep^{in}(\ep^{-\lambda}t)-g_\ep^{in}(\ep^{-\lambda}t)$, 
namely
\begin{equation}\label{BNPPL:evhg}
\big(\partial_t+\ep^{-\lambda}v\cdot\nabla_x \big)\big(h_\ep^{in}-g_\ep^{in})
=\ep^{-2\lambda}\Big(\tilde{\text{L}}_\ep h_\ep^{in}-\LL g_\ep^{in} \Big),
\end{equation}
where $\LL:=\frac{\mu}{2}\Delta_{|v|}$.
We observe that we can write \eqref{BNPPL:evhg} as 
\begin{equation}\label{BNPPL:evhg2}
\big(\partial_t+\ep^{-\lambda}v\cdot\nabla_x \big)\big(h_\ep^{in}-g_\ep^{in})
=\ep^{-2\lambda}\Big[\tilde{\text{L}}_\ep\big(h_\ep^{in}-g_\ep^{in})+\big(\tilde{\text{L}}_\ep-\LL\big) g_\ep^{in} \Big].
\end{equation}
Hence we can consider $\big(\tilde{\text{L}}_\ep-\LL\big) g_\ep^{in} $, in \eqref{BNPPL:evhg2}, as a source term. 
Recalling that
\begin{equation*}
\tilde{\text{L}}_\ep g_\ep^{in}=\mu \ep^{-2\alpha}\int_{-1}^1 d\rho\,
\big[g_\ep^{in}(v')-g_\ep^{in}(v)  \big],
\end{equation*}
we set
\begin{equation*}\begin{split}
&g_\ep^{in}(v')-g_\ep^{in}(v)\\
&=  (v'-v)\cdot\nabla_{|_{S_1}}g_\ep^{in}(v)\\
&\quad+\frac 1 2  (v'-v)\otimes (v'-v)\nabla_{|_{S_1}}\nabla_{|_{S_1}}g_\ep^{in}(v)\\
&\quad+\frac 1 6 (v'-v)\otimes (v'-v)\otimes (v'-v)
\nabla_{|_{S_1}}\nabla_{|_{S_1}}\nabla_{|_{S_1}}g_\ep^{in}(v)
+R_{\ep},\\
\end{split}\end{equation*}
with $R_{\ep}=\mathcal O (|v-v'|^4)$.
Integrating with respect to $v$ and using symmetry arguments we obtain 
\begin{equation*}\begin{split}
\tilde{\text{L}}_\ep g_\ep^{in}=\mu\ep^{-2\alpha}\big\{
\frac 1 2 \Delta_{|v|}g_\ep^{in}\int_{-1}^1 d\rho\,|v'-v|^2
+\int_{-1}^1 d\rho\,R_{\ep}\big\}.
\end{split}
\end{equation*}
Observe that $|v'-v|^2=4\sin^2\frac{\theta_\ep(\rho)}{2}$. (See Figure \ref{BNPPL_F1}). We remind that the scattering angle $$\theta_\ep(\rho)\leq \pi\ep^\alpha \sup_{r\in[0,1]}|r\,\phi'( r)|+\tilde{C}\ep^{2\alpha}$$ and $\max_{\rho\in[0,1]}\theta_\ep(\rho) \leq C\ep^{\alpha}$ (see \cite{DR}, Section 3, for further details). Moreover
$$B:=\lim_{\ep\to 0}\frac{\mu}{2}\ep^{-2\alpha}\int_{-1}^{1}\theta_\ep(\rho)^2d\rho$$
is the diffusion coefficient of the Landau equation, $B<\infty$, hence 
\begin{equation*}\begin{split}
\tilde{\text{L}}_\ep g_\ep^{in}=B\,\Delta_{|v|}g_\ep^{in}+\frac{\mu}{2}\ep^{-2\alpha} \int_{-1}^1 d\rho\,R_{\ep}.
\end{split}
\end{equation*}
Therefore 
\begin{equation}\label{BNPPL:asymequiv}
\big\|\big(\tilde{\text{L}}_\ep-\LL \big)g_\ep^{in}\big\|_\infty \leq C\ep^{2\alpha},
\end{equation}
which vanishes for $\ep\to 0$.

For a smooth reading we set $w_\ep:=h_\ep^{in}-g_\ep^{in}$ and $A_\ep:=\ep^{-2\lambda}\big(\tilde{\text{L}}_\ep-\LL\big) g_\ep^{in}$. Hence \eqref{BNPPL:evhg2} becomes 
\begin{equation*}
\big(\partial_t+\ep^{-\lambda} v\cdot\nabla_x \big)w_\ep
=\ep^{-2\lambda}\tilde{\text{L}}_\ep w_\ep+A_\ep.
\end{equation*}
Let $\tilde{S}_\ep(t):=S^0_\ep(\ep^{-\lambda}t)$ be the semigroup associated to the  generator 
$-\ep^{-\lambda}\big(v\cdot \nabla_x - \ep^{-\lambda}\,\tilde{\text{L}}_\ep\big)$. By equation (\ref{BNPPL:evhg2})
we get
$$
w_\ep(t)=\tilde{S}_\ep(t)w_\ep(0)+\int_0^t ds\ \tilde{S}_\ep(t-s)\, A_{\ep}(s).
$$
Since $w_\ep(0)=0$ we get 
$$
w_\ep(t)=\int_0^t ds\ \tilde{S}_\ep(t-s)\, A_{\ep}(s).
$$
By the usual series expansion for $\tilde{S}_\ep(t)$ we obtain
\begin{equation*}
\begin{split}
w_\ep(x,v,t)&=\int_0^t ds\sum_{N\geq 0} e^{-2 \mu \ep^{-2\alpha-2\lambda} (t-s)}\left(\mu_\ep \ep \right)^{N}\int_{0}^{\ep^{-\lambda}(t-s)}dt_1\dots\int_{0}^{t_{N-1}}dt_N\\&
\int_{-1}^{1}d\rho_1\dots\int_{-1}^{1}d\rho_N\, \chi(\tau=0)\,A_\ep(\gamma^{-\ep^{-\lambda}(t-s)}(x,v),s). 
\end{split}
\end{equation*}
Thanks to \eqref{BNPPL:asymequiv} we have that $A_\ep$ vanishes in the limit, therefore
\begin{equation*}
\|w_\ep( t)\|_\infty\leq T\|A_\ep( t)\|_\infty\leq C\ep^{2\alpha-2\lambda}
\end{equation*}
Hence $h_\ep^{in}$ and $g_\ep^{in}$ are asymptotically equivalent in $L^\infty$.
\end{proof}

\begin{proposition}\label{BNPPL:grazcollOUT}
Let $T>0$. For any $t\in (0,T]$  
\begin{equation}\label{BNPPL:eq:grazcollOUT}
\|h_{\ep}^{out}(\ep^{-\lambda}t)-g_{\ep}^{out}(\ep^{-\lambda}t)\|_{\infty}\leq C\ep^{2(\alpha-\lambda)}
\end{equation}
\end{proposition}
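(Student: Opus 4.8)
The plan is to mimic the argument of Proposition~\ref{BNPPL:grazcoll1}, exploiting the crucial simplification that the boundary data of $h_\ep^{out}$ and $g_\ep^{out}$ coincide. Set $W_\ep:=h_\ep^{out}-g_\ep^{out}$. Since $h_\ep^{out}$ solves the Boltzmann problem \eqref{BNPPL:eq:Boltz} and $g_\ep^{out}$ the Landau problem \eqref{BNPPL:eq:Landau}, both with vanishing initial datum and with the \emph{same} boundary values $\rho_1,\rho_2$, the difference $W_\ep$ carries zero initial datum \emph{and} zero boundary datum. Writing the equation for $W_\ep$ after rescaling time by $\ep^{-\lambda}$ and adding and subtracting $\tilde{\text{L}}_\ep$, one obtains
\begin{equation*}
\big(\partial_t+\ep^{-\lambda}v\cdot\nabla_x\big)W_\ep
=\ep^{-2\lambda}\Big[\tilde{\text{L}}_\ep W_\ep+\big(\tilde{\text{L}}_\ep-\LL\big)g_\ep^{out}\Big],
\end{equation*}
so that $A_\ep^{out}:=\ep^{-2\lambda}\big(\tilde{\text{L}}_\ep-\LL\big)g_\ep^{out}$ plays the role of a source term, exactly as in the proof of Proposition~\ref{BNPPL:grazcoll1}.

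Next I would represent $W_\ep$ by Duhamel's formula. Because $W_\ep$ vanishes on the boundary, every backward trajectory that meets $\pa\Lambda$ contributes zero, and hence the solution operator associated with the homogeneous problem is precisely the never-exiting semigroup $\tilde{S}_\ep(t)=S_\ep^0(\ep^{-\lambda}t)$ already used above. Since $W_\ep(0)=0$,
\begin{equation*}
W_\ep(t)=\int_0^t ds\ \tilde{S}_\ep(t-s)\,A_\ep^{out}(s).
\end{equation*}
As $\tilde{S}_\ep(t)$ is a (sub-)Markov contraction on $L^\infty$, this yields $\|W_\ep(t)\|_\infty\leq T\,\sup_{s\in(0,T]}\|A_\ep^{out}(s)\|_\infty$ for $t\in(0,T]$.

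It therefore remains to bound the source, that is to prove $\|(\tilde{\text{L}}_\ep-\LL)g_\ep^{out}\|_\infty\leq C\ep^{2\alpha}$, whence $\|A_\ep^{out}\|_\infty\leq C\ep^{2\alpha-2\lambda}$ and the claimed bound $C\ep^{2(\alpha-\lambda)}$ follows. This is the very computation performed for $g_\ep^{in}$: Taylor-expand $g_\ep^{out}(v')-g_\ep^{out}(v)$ to third order in $v'-v$ along $S_1$, integrate in $\rho$, use the symmetry of the scattering together with $|v'-v|^2=4\sin^2(\theta_\ep(\rho)/2)$, $\theta_\ep(\rho)\le C\ep^\alpha$, and the finiteness of $B=\lim_\ep \tfrac{\mu}{2}\ep^{-2\alpha}\int\theta_\ep^2\,d\rho$; the first order term cancels, the second reproduces $\LL g_\ep^{out}$, and the remainder $R_\ep=\mathcal O(|v'-v|^4)$ contributes $\tfrac{\mu}{2}\ep^{-2\alpha}\int R_\ep\,d\rho=\mathcal O(\ep^{2\alpha})$.

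The hard part is the one ingredient this computation tacitly requires: a uniform-in-$\ep$ \emph{pointwise} bound on the velocity derivatives of $g_\ep^{out}$ up to third order, since the remainder estimate must hold in $L^\infty$, i.e. at every $(x,v)$. For $g_\ep^{in}$ such regularity comes for free from the strong smoothing of $G_\ep^0$ acting on the fixed datum $f_0$. For $g_\ep^{out}$ the obstruction is that $f_B$ is discontinuous across the grazing set $v_1=0$ and is injected along $\pa\Lambda$, so the $v$-derivatives are a priori large in a boundary/grazing layer, precisely where an $L^\infty$ bound is most demanding. I would resolve this by exploiting that the collision operator acts with the very large rate $\ep^{-2\alpha-\lambda}$, so the velocity operator $\Delta_{|v|}$, uniformly elliptic on $S_1$, smooths the velocity profile on an $\ep$-short scale; quantifying this smoothing — or importing the corresponding regularity estimates from the time-dependent analysis in \cite{BNP} — gives the required uniform derivative bounds and closes the argument exactly as in Proposition~\ref{BNPPL:grazcoll1}.
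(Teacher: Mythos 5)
Your proposal is correct in substance and runs on the same engine as the paper's argument, which itself only remarks that the proof is ``essentially the same'' as that of Proposition \ref{BNPPL:grazcoll1}; where you genuinely differ is in the treatment of the boundary. The paper invokes the extension procedure of \cite{BNPP}, Section 5, for $h_\ep^{out}$: the out parts are boundary-driven rather than of semigroup form, so to reuse the comparison of Proposition \ref{BNPPL:grazcoll1} (formulated for data $\ell_0\in L^\infty(\R^2\times S_1)$ and the extended semigroups) one first continues the boundary-driven solutions beyond $\Lambda$. You instead observe that $W_\ep=h_\ep^{out}-g_\ep^{out}$ carries zero initial \emph{and} zero inflow data, so the killed semigroup $\tilde S_\ep$ represents $W_\ep$ directly by Duhamel inside $\Lambda$ with source $\ep^{-2\lambda}(\tilde{\text{L}}_\ep-\LL)g_\ep^{out}$; this is a cleaner and more self-contained way to do the bookkeeping, and it legitimately dispenses with the extension for that step. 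What the extension buys the paper, however, is not only bookkeeping: it is also the natural vehicle for the regularity input that you correctly isolate as the hard point. Your proposed fix via hypoelliptic smoothing at the divergent collision rate is delicate exactly where you need it most: for $x$ at distance $d$ from $\pa\Lambda$, a backward trajectory has only a time of order $d$ available before exiting, so the grazing discontinuity of $f_B$ is smoothed in velocity only over a width of order $(\ep^{-\lambda}d)^{1/2}$, which degenerates as $d\to 0$; in the layer $d\lesssim \ep^{2\alpha+\lambda}$ this width becomes comparable to the scattering angle $\theta_\ep\sim\ep^{\alpha}$, and the Taylor expansion behind the pointwise bound $\|(\tilde{\text{L}}_\ep-\LL)g_\ep^{out}\|_\infty\leq C\ep^{2\alpha}$ breaks down there. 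Hence a uniform-in-$(x,v)$ source bound does not follow from bulk smoothing alone, and your argument still requires the imported machinery (the extension of \cite{BNPP}, Section 5, which trades boundary data for whole-space data constant in $v$, together with the time-dependent estimates of \cite{BNP}) to control this boundary/grazing layer --- essentially the one ingredient the paper cites and you half-acknowledge. Granting that ingredient, your route and the paper's are equivalent and both deliver the stated bound $C\ep^{2(\alpha-\lambda)}$.
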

The proof is essentially the same of Proposition \eqref{BNPPL:grazcoll1}, and to let it work we observe that we need the extension procedure discussed in \cite{BNPP}, Section 5, for $h_{\ep}^{out}$.  

\begin{proof} [Proof of Proposition \ref{BNPPL:prop:existencegepS}] 
From Proposition 2.1 in \cite{BNPP}, 
for any $\ell_0\in L^\infty(\R^2\times S_1)$, we have 
\begin{equation}\label{S0bounded}
||S_\ep^0(\ep^{-\lambda})\ell_0 ||_\infty\leq \beta\, ||\ell_0||_\infty,\ \ \ \beta<1.
\end{equation}

Therefore for $\ep$ small enough
\begin{equation}\label{BNPPL:estG01}
\begin{split}
||G_\ep^0(\ep^{-\lambda})\ell_0||_\infty\leq &\, ||(G_\ep^0(\ep^{-\lambda})-S_\ep^0(\ep^{-\lambda}))\ell_0||_\infty +||S_\ep^0(t)\ell_0||_\infty\\&
 \leq  ||(G_\ep^0(\ep^{-\lambda})-S_\ep^0(\ep^{-\lambda}))\ell_0||_\infty +\beta\, ||\ell_0||_\infty
\end{split}
\end{equation}

Hence, using \eqref{BNPPL:eq:gepinsepin} in \eqref{BNPPL:estG01}, we get 
\begin{equation*}\label{BNPPL:estG02}
\begin{split}\displaystyle
||G_\ep^0(\ep^{-\lambda})\ell_0||_\infty\leq & \,||(G_\ep^0(\ep^{-\lambda})-S_\ep^0(\ep^{-\lambda}))\ell_0||_\infty +\beta\, ||\ell_0||_\infty||\\&
\;\leq  \;\tilde{\omega}(\ep)+\beta\, ||\ell_0||_\infty
\,< \tilde{\beta}\,||\ell_0||_\infty,\ \ \  \tilde{\beta}<1.
\end{split}
\end{equation*}
Here $\tilde{\omega}(\ep)=C\,\ep^{2(\alpha-\lambda)}$. 

Finally, since $ \tilde{\beta}<1$, by \eqref{BNPPL:eq:gSN} we get
\begin{equation*}
|| g_\ep^S||_\infty\leq \frac{1}{(1-\tilde{\beta})}\ ||g_\ep^{out}(\ep^{-\lambda})||_\infty\leq \frac{1}{(1- \tilde{\beta})}\, \rho_2.
\end{equation*}
\end{proof}
The last step is the proof of the convergence of $g_\ep^S$ to the stationary solution of the diffusion problem 
\begin{equation}
\label{BNPPL:HEAT1}
\left\{\begin{array}{ll}
\partial_{t}\varrho-D\Delta\varrho=0
& \vspace{2mm}\\
\varrho(x,t)=\rho_1,\ \ \ \ \ x\in \{0\}\times\R,\ \ \ t\geq 0\  
\vspace{0.2cm}\\
\varrho(x,t)=\rho_2,\ \ \ \ \ x\in \{L\}\times\R,\ \ \ t\geq 0,
\end{array}\right.
\end{equation}
with the diffusion coefficient $D$ given by the Green-Kubo formula \eqref{BNPPL:GK}.
We remind that the stationary solution $\varrho^S$ to the problem \eqref{BNPPL:HEAT1} has the following explicit expression 
\begin{equation}\label{BNPPL:rhoSexp}
\varrho^S(x)= \frac{\rho_1(L-x_1)+\rho_2 x_1}{ L},
\end{equation}
where $x=(x_1,x_2)$. 

By using the Hilbert expansion technique in $L^2$ we can prove 
\begin{proposition}\label{BNPPL:prop:hSrhoS}
Let $g_\ep^S\in L^{\infty}((0,L)\times S_1)$ be the solution to the problem \eqref{BNPPL:eq:LandauS}. Then  
\begin{equation}\label{BNPPL:eq:gSrhoS}
g_{\ep}^S\to \varrho^S
\end{equation}
as $\ep\to 0$, where $\varrho^S(x)$ is given by \eqref{BNPPL:rhoSexp}. The convergence is in $L^2((0,L)\times S_1)$.
\end{proposition}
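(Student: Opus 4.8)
The plan is to read \eqref{BNPPL:eq:LandauS} as the diffusive rescaling of a collisional equation and to run a Hilbert (Chapman--Enskog) expansion in the small parameter $\delta:=\ep^{\lambda}$. Multiplying the interior equation by $\ep^{\lambda}$ turns it into $\delta\,(v\cdot\nabla_x)g_\ep^S=\LL g_\ep^S$, where $\LL=\frac{\mu}{2}\Delta_{|v|}$ is nonpositive and self-adjoint on $L^2(S_1)$ and its kernel consists of the velocity-independent functions. Writing $P$ for the orthogonal projection onto $\ker\LL$ (the average over $S_1$) and $P^\perp=I-P$, I would insert the ansatz $g_\ep^S\sim g_0+\delta g_1+\delta^2 g_2+\dots$ and match powers of $\delta$. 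At order $\delta^0$ one gets $\LL g_0=0$, so $g_0=g_0(x)$ is velocity independent; at order $\delta^1$ the equation $\LL g_1=(v\cdot\nabla_x)g_0$ is solvable because its right-hand side lies in the range of $\LL$, and $g_1=\LL^{-1}(v\cdot\nabla_x g_0)$ modulo $\ker\LL$; at order $\delta^2$ the Fredholm solvability condition $P[(v\cdot\nabla_x)g_1]=0$ closes into the macroscopic relation $\Delta_x g_0=0$, with the constant in front of the Laplacian identified with the Green--Kubo coefficient \eqref{BNPPL:GK}.

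Since the boundary data in \eqref{BNPPL:eq:LandauS} are velocity independent and depend only on $x_1$, the harmonic $g_0$ with $g_0(0)=\rho_1$, $g_0(L)=\rho_2$ is forced to be the linear profile $\varrho^S$ of \eqref{BNPPL:rhoSexp}. Because $\partial_{x_1}\varrho^S=\frac{\rho_2-\rho_1}{L}$ is constant and $\LL^{-1}v_1=-\frac{2}{\mu}v_1$, the corrector collapses to $g_1=-\frac{2}{\mu}\frac{\rho_2-\rho_1}{L}\,v_1$, and a one-line check shows that the truncated field $g_\ep^{app}:=\varrho^S+\delta g_1$ solves the interior equation $\delta(v\cdot\nabla_x)g_\ep^{app}=\LL g_\ep^{app}$ \emph{exactly}, with no volume residual. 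The sole defect sits on the boundary: on each incoming set $g_\ep^{app}$ differs from the prescribed constant by $\delta g_1=O(\ep^{\lambda})$.

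I would then put $R_\ep:=g_\ep^S-g_\ep^{app}$, which solves the homogeneous equation $\delta(v\cdot\nabla_x)R_\ep=\LL R_\ep$ with an incoming datum of size $O(\ep^{\lambda})$, and run an $L^2((0,L)\times S_1)$ energy estimate. Testing against $R_\ep$, the transport operator is skew-adjoint and leaves only the boundary term $\frac{\delta}{2}\int_{S_1}v_1[R_\ep^2(L,\cdot)-R_\ep^2(0,\cdot)]\,dv$, whose outgoing part has the favourable sign and whose incoming part is explicit and $O(\ep^{3\lambda})$; the collision term produces the coercive Dirichlet form $\frac{\mu}{2}\|\partial_\theta R_\ep\|_{L^2}^2$. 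The spectral gap of $-\Delta_{|v|}$ on the circle (Poincar\'e on $S_1$) then gives control of the microscopic part, $\|P^\perp R_\ep\|_{L^2}\le C\,\ep^{3\lambda/2}$, together with the outgoing traces; integrating the remainder equation against $1$ and against $v_1$ further shows that the current $\int_{S_1}v_1R_\ep\,dv=\int_{S_1}v_1P^\perp R_\ep\,dv$ is $x_1$-independent and $O(\ep^{3\lambda/2})$.

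The main obstacle is the hydrodynamic part $PR_\ep$, which the energy identity does not control directly. I expect to close it through the moment relations: the first-moment identity is a discrete Fick law $\int_{S_1}v_1R_\ep\,dv=-\frac{2\pi\delta}{\mu}\,\partial_{x_1}(PR_\ep)-O(\ep^{\lambda})\,\partial_{x_1}m_\ep$ with $m_\ep:=\int_{S_1}v_1^2P^\perp R_\ep\,dv=O(\ep^{3\lambda/2})$, which forces $PR_\ep$ to be affine in $x_1$ up to an $H^{-1}(0,L)$-small error. The genuinely delicate step --- and the reason the statement is phrased in $L^2$ rather than $L^\infty$ --- is pinning down the additive constant, i.e.\ a boundary value of $PR_\ep$: the outgoing traces are controlled only in the degenerate weight $|v_1|\,dv$, so the naive Cauchy--Schwarz bound on $\int_{S_1}R_\ep\,dv$ at $x_1\in\{0,L\}$ diverges logarithmically on the grazing set $v_1=0$. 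I would absorb this exactly as in the time-dependent analysis of \cite{BNP}, isolating a kinetic boundary layer of width $O(\delta)$ whose $L^2$ mass is $O(\ep^{\lambda/2})$ at worst and hence negligible, and matching the macroscopic trace against the explicit steady heat profile \eqref{BNPPL:rhoSexp}. Combining the interior affine structure, the smallness of the current, and the boundary-layer bound gives $\|PR_\ep\|_{L^2(0,L)}\to0$; together with $\|P^\perp R_\ep\|_{L^2}\to0$ and $\|g_\ep^{app}-\varrho^S\|_{L^2}=O(\ep^{\lambda})$ this yields $g_\ep^S\to\varrho^S$ in $L^2((0,L)\times S_1)$.
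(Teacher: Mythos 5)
Your route is genuinely different from the paper's. The paper does not attack the stationary problem \eqref{BNPPL:eq:LandauS} head-on: for this proposition it refers to \cite{BNPP}, Section 4.2, where the stationary convergence is \emph{deduced from the time-dependent problem} --- the contraction $\|G_\ep^0(\ep^{-\lambda})\ell_0\|_\infty\le\tilde\beta\,\|\ell_0\|_\infty$, $\tilde\beta<1$, of Proposition \ref{BNPPL:prop:existencegepS} shows that $g_\ep^S$ is approached geometrically fast (in $n$, uniformly in $\ep$) by the time-dependent solution at diffusive times $n\ep^{-\lambda}$; the $L^2$ Hilbert expansion for the time-dependent problem gives $g_\ep(\ep^{-\lambda}t)\to\varrho(t)$; and the explicit heat flow relaxes exponentially to the profile \eqref{BNPPL:rhoSexp}, so one lets first $\ep\to0$ and then $t\to\infty$. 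This detour exists precisely to avoid the stationary difficulty you run into: in the time-dependent energy estimate the hydrodynamic part of the remainder is controlled by the $\partial_t$ term and Gronwall, and no boundary layer must be confronted. Up to the point you yourself flag, your direct stationary argument is correct and in one respect sharper than the generic expansion: since $\varrho^S$ is affine and $\LL^{-1}v_1=-\frac{2}{\mu}v_1$, your truncation solves the interior equation \emph{exactly}; the energy identity, the spectral-gap bound $\|P^\perp R_\ep\|_{L^2}\le C\ep^{3\lambda/2}$, the constancy and $O(\ep^{3\lambda/2})$ smallness of the current, and the affine structure of $PR_\ep$ modulo small errors are all sound.

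The decisive step, however --- pinning the additive constant of $PR_\ep$ --- is asserted rather than proved, and it leans on a reference that cannot support it: \cite{BNP} treats a test particle in the whole plane, so it contains no boundary, no stationary problem, and no boundary-layer analysis to repeat ``exactly''. Your own accounting is also internally inconsistent at this point: a layer of width $O(\ep^{\lambda})$ has $L^2$ mass $O(\ep^{\lambda/2})$ only if its amplitude is $O(1)$, and if $O(1)$ amplitude is all you know, discarding the layer as $L^2$-negligible does \emph{not} let you ``match the macroscopic trace'' --- the interior affine profile could then sit an $O(1)$ constant away from the boundary datum, which is exactly what must be excluded. The gap is closable with tools you already have, but they must be assembled: your zeroth and first moment identities show that $\pi PR_\ep+m_\ep$ is \emph{exactly} affine (its $x_1$-derivative equals the constant $-\frac{\mu}{2}\ep^{-\lambda}J_\ep$), so it suffices to bound its boundary values; modulo standard trace theory for the transport operator, these are controlled by combining the outgoing-trace bound $\int_{S_1}|v_1|R_\ep^2\,dv\le C\ep^{2\lambda}$ from your energy identity with the uniform bound $\|R_\ep\|_\infty\le C$ (available from Proposition \ref{BNPPL:prop:existencegepS} together with the explicit corrector), splitting the grazing set $\{|v_1|\le\eta\}$ and optimizing $\eta\sim\ep^{\lambda}$, which tames the logarithmic divergence you noticed and yields boundary values of size $O(\ep^{\lambda}\log^{1/2}(1/\ep))$, hence $\|PR_\ep\|_{L^2(0,L)}=O(\ep^{\lambda/2})$ with no Milne layer at all. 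Either carry out this trace argument explicitly or follow the paper's detour through the time-dependent problem; as written, the proof is incomplete at its crux.
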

For the proof we refer to \cite{BNPP}, Section 4.2.
This concludes our analysis of the Markov part of the proof.

Recalling the expression (\ref{BNPPL:def:fep}) for the one-particle correlation function $f_\ep$, we introduce a decomposition analogous to those ones used for $g_{\ep}(t)$ and $h_{\ep}(t)$, namely
\begin{equation}\label{BNPPL:def:fepOUT}
f_{\ep}^{out}(x,v,t):=\EE_\varepsilon[f_B (T^{-(t-\tau)}_{\mbf{c}_{N}}(x,v))\chi(\tau>0)]
\end{equation} 
and
\begin{equation}\label{BNPPL:def:fepIN}
f_{\ep}^{in}(x,v,t):= \EE_\ep[f_0 (T^{-t}_{\mbf{c}_{N}}(x,v))\chi(\tau=0)],
\end{equation} 
so that 
\begin{equation*}\label{BNPPL:def:fep2}
f_{\ep}(x,v,t)=f_{\ep}^{out}(x,v,t)+f_{\ep}^{in}(x,v,t).
\end{equation*} 
Here $f_{\ep}^{out}$ is the contribution due to the trajectories that do leave $\Lambda$ at times smaller than $t$, while $f_{\ep}^{in}$ is the contribution due to the trajectories that stay internal to $\Lambda$. We introduce the flow $F_\ep^0(t)$ such that 
\begin{equation*}\label{BNPPL:def:F0}
(F_\ep^0(t)\ell)(x,v)=\EE_\ep[\ell (T^{-t}_{\mbf{c}_{N}}(x,v))\chi(\tau=0)], \quad \ell\in L^{\infty}(\Lambda\times S_1)
\end{equation*} 
and remark that $F_\ep^0$ is just the dynamics ''inside'' $\Lambda$. In particular $f_{\ep}^{in}(t)=F_\ep^0(t)f_0.$

To detect the stationary solution $f_\ep^S$ for the microscopic dynamics we proceed as for the Boltzmann evolution (see \eqref{BNPPL:def:ST}) by setting, for $t_0>0$,
\begin{equation*}
f_\ep^S=f_{\ep}^{out}(t_0)+F_\ep^0(t_0)f_\ep^{S}
\end{equation*}
and we can formally express the stationary solution as the Neumann series
\begin{equation}\label{BNPPL:eq:fSN}
f_\ep^S=\sum_{n\geq 0}(F_\ep^0(t_0))^n f_\ep^{out}(t_0).
\end{equation}
To show the convergence of the series \eqref{BNPPL:eq:fSN} and hence existence of 
$f_\ep^S$ we first need the following Propositions.

\begin{proposition}\label{BNPPL:th:propCIN}
Let $T>0$. For any $t\in (0,T]$  
\begin{equation}\label{BNPPL:eq:CIN}
\|f_{\ep}^{out}(t)-h_{\ep}^{out}(t)\|_{L^\infty(\Lambda\times S_{1} )}\leq C\ep^{\gamma}\, t^3,
\end{equation}
where $h_\ep^{out}$ solves (\ref{BNPPL:eq:Boltz}) and $\gamma=1-8(\alpha-\frac{\lambda}{2})$.
\end{proposition}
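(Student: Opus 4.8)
The plan is to expand both $f_\ep^{out}(t)$ and $h_\ep^{out}(t)$ as series indexed by the number $N$ of scatterers the backward trajectory interacts with, and then to compare the two expansions term by term. For $h_\ep^{out}$ the relevant series is the ``out'' analogue of the explicit representation \eqref{BNPPL:eq:heps}: a sum over $N$, over ordered collision times $t>t_1>\dots>t_N$ and impact parameters $\rho_1,\dots,\rho_N$, weighted by $(\mu_\ep\ep)^N e^{-2\mu_\ep\ep(t-\tau)}$ and evaluated on the broken Markovian path $\gamma^{-(t-\tau)}(x,v)$, with the boundary datum $f_B$ read off at the exit time. For $f_\ep^{out}$ I would expand the Poisson expectation $\EE_\ep$ in \eqref{BNPPL:def:fepOUT} according to the number of obstacle centers that the \emph{true} mechanical flow $T^{-s}_{\mbf{c}_N}(x,v)$, $s\in[0,t]$, actually meets; using \eqref{BNPPL:poisson} this yields a structurally identical series in which the centers $c_1,\dots,c_N$ are the integration variables and the mechanical flow replaces the Markovian one. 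The boundary contribution is handled exactly as for $h_\ep^{out}$ via the extension procedure of \cite{BNPP}, Section 5.

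Next I would perform, trajectory by trajectory, the change of variables from the centers $c_i$ to the pairs (collision time $t_i$, impact parameter $\rho_i$). On the set of \emph{good} histories --- those in which the interaction regions of distinct scatterers are disjoint, the collisions are well separated in time, and no scatterer is ever met twice --- this map is a diffeomorphism whose Jacobian produces precisely the weight $(\mu_\ep\ep)^N$, and the genuine two-body interaction reduces to a single potential-scattering event described by the deflection angle $\theta_\ep(\rho_i)$ and the rule \eqref{BNPPL:scattering}. Hence the good part of the $f_\ep^{out}$-series coincides with the $h_\ep^{out}$-series up to the discrepancy already controlled by the potential/grazing-collision analysis of Proposition \ref{BNPPL:grazcoll1} (the Taylor expansion of $v'-v$ and the remainder $R_\ep$), which is of smaller order than the pathological contribution below.

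The core of the estimate, and the main obstacle, is to bound the measure of the complementary \emph{pathological} set of configurations. These are of two types: \emph{recollisions}, where the mechanical trajectory returns within distance $\ep$ of a scatterer it has already visited, and \emph{interferences} (overlaps), where two centers lie within $O(\ep)$ of one another along the path so that their interaction tubes overlap and the two-body reduction fails. Either event forces a geometric coincidence between two of the collision parameters; in the planar weak-coupling setting this is delicate precisely because the deflection per collision is only $\theta_\ep\sim\ep^{\alpha}$, so the path is nearly straight and one must quantify how little total curvature suffices to steer it back onto a previous scatterer of size $\ep$. I would estimate the measure of a single recollision/interference event and bound it by a power of $\ep$, following \cite{BNPP} (and \cite{BNP}, Section 6); this explicit geometric estimate is what produces the exponent $\gamma=1-8(\alpha-\tfrac{\lambda}{2})$.

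Finally I would sum the pathological contributions over the index at which the bad event occurs and integrate over the ordered collision times. The exponential weight $e^{-2\mu_\ep\ep(t-\tau)}$ resums the large powers of $\mu_\ep\ep=\mu\,\ep^{-2\alpha-\lambda}$, keeping the total contribution of order one; the combinatorial count of pairs that can generate a recollision or interference, together with the time integrations $\int_0^t dt_1\cdots$, supplies the polynomial factor $t^3$, while the geometric smallness of each bad event supplies the factor $\ep^{\gamma}$. Collecting the (smaller) good-part error of Proposition \ref{BNPPL:grazcoll1} with the pathological bound then gives $\|f_\ep^{out}(t)-h_\ep^{out}(t)\|_{L^\infty(\Lambda\times S_1)}\le C\ep^{\gamma}t^3$, the convergence $\ep^{\gamma}\to0$ being guaranteed by Assumption \ref{BNPPL:A1}.
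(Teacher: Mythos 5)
Your overall architecture --- series expansion of both quantities, the change of variables $c_i\mapsto(t_i,\rho_i)$ on good configurations, and an explicit measure estimate of the recollision and interference sets --- is exactly the proof the paper points to: the paper gives no in-line argument for this proposition, deferring entirely to \cite{BNP}, Sections 5 and 6, and \cite{BNPP}, Section 5, and your sketch reconstructs the strategy of those references faithfully.

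There is, however, one step in your write-up that is wrong and, taken literally, would break the stated bound. You let the good part of the $f_\ep^{out}$-series agree with the $h_\ep^{out}$-series only ``up to the discrepancy already controlled by Proposition \ref{BNPPL:grazcoll1}'', i.e.\ a Taylor/grazing-collision error of size $O(\ep^{2\alpha})$, and you assert this is of smaller order than the pathological contribution $C\ep^{\gamma}t^3$. That assertion fails for small $\alpha$: as $\alpha,\lambda\to 0$ one has $\gamma\to 1$ while $2\alpha\to 0$, so $\ep^{2\alpha}\gg\ep^{\gamma}$, and an additive $O(\ep^{2\alpha})$ term would destroy \eqref{BNPPL:eq:CIN} in that range. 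The point you are missing is that no such term arises in this comparison at all: $h_\ep^{out}$ solves \eqref{BNPPL:eq:Boltz} with the $\ep$-dependent operator $\mathrm{L}_\ep$, whose velocity transitions are given by the exact scattering map $\theta_\ep(\rho)$ of the rescaled potential via \eqref{BNPPL:scattering}, not by a fixed or limiting kernel. Hence on good configurations the change of variables makes the two series coincide term by term exactly; the Taylor expansion of $v'-v$ and the remainder $R_\ep$ belong to the separate comparison of $h_\ep$ with $g_\ep$ (Propositions \ref{BNPPL:grazcoll1} and \ref{BNPPL:grazcollOUT}), not here. What does need control in the good part, and what your sketch omits, is the purely geometric mismatch between the mechanical flow and the Markovian broken line: the light particle spends a time $O(\ep)$ inside each obstacle on a curved arc instead of deviating instantaneously, and these $O(\ep)$ displacements, accumulated over up to $N\sim\mu_\ep\ep\, t$ collisions, shift the exit time and exit point entering $f_B(\gamma^{-(t-\tau)}(x,v))$; this, together with obstacles overlapping the boundary, is part of what the cited estimates absorb into $C\ep^{\gamma}t^3$. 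A final small slip: the proposition itself requires no smallness hypothesis on $\ep^{\gamma}$; Assumption \ref{BNPPL:A1} (indeed already $\ep^{\gamma-3\lambda}\to0$) is only invoked later, when \eqref{BNPPL:eq:CIN} is evaluated at the diffusive time $t=\ep^{-\lambda}$.
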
 
\begin{proposition}\label{BNPPL:prop:fepinhepin}
For every $\ell_0\in L^\infty(\Lambda\times S_1)$  
 \begin{equation}\label{BNPPL:eq:fepinhepin}
||\left(F_\ep^0(t)-S_\ep^0(t)\right)\ell_0 ||_\infty\leq C||\ell_0||_\infty\, \ep^{\gamma}\, t^3,\quad \forall t\in [0,T],
\end{equation}
where $\gamma=1-8(\alpha-\frac{\lambda}{2})$.
\end{proposition}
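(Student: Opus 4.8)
The plan is to follow the scheme used for the ``out'' comparison of Proposition \ref{BNPPL:th:propCIN}, adapted to backward trajectories that never meet the boundary, where it is in fact structurally simpler: in both $F_\ep^0(t)\ell_0$ and $S_\ep^0(t)\ell_0$ the datum transported back is the same function $\ell_0$ on the event $\{\tau=0\}$, so there is no boundary term to follow. First I would expand both flows according to the number $N$ of scatterers (resp.\ collisions) met: for $F_\ep^0$ one develops the Poisson expectation $\EE_\ep[\,\cdot\,]$ in the number of obstacles and integrates out those whose $\ep$-interaction region is never entered; for $S_\ep^0$ one uses the second ($\chi(\tau=0)$) sum of \eqref{BNPPL:eq:heps} with $f_0$ replaced by $\ell_0$.

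For fixed $N$ I would then pass, in the mechanical term, from the centres of the obstacles actually hit to the collision parameters $(t_1,\dots,t_N,\rho_1,\dots,\rho_N)$ via the standard Lorentz-gas change of variables; its Jacobian reproduces precisely the weight $(\mu_\ep\ep)^N$ of \eqref{BNPPL:eq:heps}, so that the two $N$-collision terms become integrals over a common domain. On the set of \emph{good} configurations -- obstacles pairwise distinct, their interaction disks pairwise disjoint, consecutive collisions separated in time, and no grazing incidence -- each mechanical encounter reduces to a single, isolated deflection governed by the scattering angle $\theta_\ep(\rho)$, i.e.\ to exactly one Markov jump of $S_\ep^0$. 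Hence the good-set contributions coincide term by term and cancel in $F_\ep^0(t)\ell_0-S_\ep^0(t)\ell_0$, which is therefore supported on the complementary \emph{pathological} set.

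The heart of the proof is the estimate of the measure of the pathological configurations, which I would split into \emph{recollisions} (the backward trajectory re-enters the interaction region of an already-visited obstacle) and \emph{overlaps} (two obstacles within distance $O(\ep)$, or a simultaneous interaction with two of them). For each such event one isolates the two or three collisions responsible, re-sums the free Markov evolution on either side of them, and bounds the remaining constrained integral geometrically: a recollision forces the post-collision velocity to point back into a tube of cross-section $O(\ep)$ around a previous obstacle, which costs a power of $\ep$ together with the smallness $\theta_\ep\sim\ep^\alpha$ of the deflection. Summing over the colliding pairs and integrating the leftover free-flight times produces the factor $t^3$, while the convergence of the collision-number series (the combinatorial weights being compensated by the exponential loss in \eqref{BNPPL:eq:heps}) keeps the sum finite; the precise balance of this geometric smallness against the collision number $\sim\ep^{-2\alpha-\lambda}$ is what fixes the exponent $\gamma$ of the statement and yields the bound $C\,\|\ell_0\|_\infty\,\ep^\gamma t^3$.

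I expect the main obstacle to be precisely the recollision estimate, which is the reason the two-dimensional case is the delicate one flagged in the introduction: the interaction tubes are long and thin and the individual deflections are tiny, so a crude count of recolliding pairs diverges and one must exploit the smallness of $\theta_\ep(\rho)$ to restrict the angular window in which a deflected trajectory can return to an earlier scatterer. Once the pathological set is shown to have measure $O(\ep^\gamma t^3)$ uniformly up to the factor $\|\ell_0\|_\infty$, the claim follows. I would close by noting that evaluating the estimate at the diffusive time $t=\ep^{-\lambda}$ turns $\ep^\gamma t^3$ into $\ep^{\gamma-3\lambda}$, which vanishes under Assumption \ref{BNPPL:A1} -- indeed already under the weaker condition $\ep^{\gamma-3\lambda}\to0$ recorded after Theorem \ref{BNPPL:th:MAIN1} -- so that the Neumann series \eqref{BNPPL:eq:fSN} converges and yields the asymptotic equivalence of $f_\ep^S$ and $h_\ep^S$.
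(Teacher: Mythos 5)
Your proposal is correct and follows essentially the same route as the paper, which gives no self-contained proof of this proposition but defers precisely to this argument in \cite{BNP}, Sections 5 and 6, and \cite{BNPP}, Section 5: expansion of both $F_\ep^0$ and $S_\ep^0$ in the number of obstacles/collisions, the change of variables from obstacle centres to impact parameters and collision times so that the good configurations cancel, and the geometric estimate of recollisions and overlaps, whose balance against the collision number $\mu_\ep\ep\, t\sim\ep^{-2\alpha-\lambda}t$ produces the bound $C\|\ell_0\|_\infty\,\ep^{\gamma}t^{3}$. The one point your sketch glosses over is that with a smooth potential of range $\ep$ each mechanical collision has duration $O(\ep)$ rather than being instantaneous, so the good-set contributions agree only modulo the finite interaction times, a time-shift error the cited references control separately without altering the scheme.
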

See Section 5 and Section 6 in \cite{BNP}, and Section 5 in \cite{BNPP} for the proof. As a corollary we can prove 
\begin{proposition}\label{BNPPL:prop:fShS}
For $\ep$ sufficiently small there exists a unique stationary solution $f_\ep^S\in L^\infty(\Lambda\times S_1)$ satisfying \eqref{BNPPL:def:ST}. 
Moreover 
\begin{equation}\label{BNPPL:eq:fShS}
\|f_\ep^S-g_\ep^S\|_\infty\leq C\ep^{\gamma-3\lambda},
\end{equation}
where $\gamma=1-8(\alpha-\frac{\lambda}{2})$.
\end{proposition}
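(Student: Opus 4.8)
The plan is to run, for the mechanical flow $F_\ep^0$, the very same Neumann-series construction already carried out for $g_\ep^S$ and $h_\ep^S$, choosing the iteration time $t_0=\ep^{-\lambda}$. The first task is a contraction estimate $\|F_\ep^0(\ep^{-\lambda})\ell_0\|_\infty\leq\bar\beta\,\|\ell_0\|_\infty$ with $\bar\beta<1$ uniformly for $\ep$ small. I would obtain it by comparing $F_\ep^0$ with the Boltzmann semigroup $S_\ep^0$: from
\[
\|F_\ep^0(\ep^{-\lambda})\ell_0\|_\infty\leq \|(F_\ep^0(\ep^{-\lambda})-S_\ep^0(\ep^{-\lambda}))\ell_0\|_\infty+\|S_\ep^0(\ep^{-\lambda})\ell_0\|_\infty,
\]
the second term is bounded by $\beta\,\|\ell_0\|_\infty$ with $\beta<1$ via \eqref{S0bounded}, while the first is controlled by Proposition \ref{BNPPL:prop:fepinhepin} evaluated at $t=\ep^{-\lambda}$, giving $C\|\ell_0\|_\infty\,\ep^{\gamma}(\ep^{-\lambda})^3=C\|\ell_0\|_\infty\,\ep^{\gamma-3\lambda}$. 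Since $\ep^{\gamma-3\lambda}\to0$, for $\ep$ small we may take $\bar\beta:=\beta+C\ep^{\gamma-3\lambda}<1$. This makes the affine map $\ell\mapsto f_\ep^{out}(\ep^{-\lambda})+F_\ep^0(\ep^{-\lambda})\ell$ a contraction on $L^\infty$, so the series \eqref{BNPPL:eq:fSN} converges, yielding existence and uniqueness of $f_\ep^S$ together with the uniform bound $\|f_\ep^S\|_\infty\leq (1-\bar\beta)^{-1}\rho_2$.

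For the quantitative comparison with $g_\ep^S$ I would subtract the two stationary fixed-point identities $f_\ep^S=f_\ep^{out}(\ep^{-\lambda})+F_\ep^0(\ep^{-\lambda})f_\ep^S$ and $g_\ep^S=g_\ep^{out}(\ep^{-\lambda})+G_\ep^0(\ep^{-\lambda})g_\ep^S$. Adding and subtracting $F_\ep^0(\ep^{-\lambda})g_\ep^S$ gives
\[
(I-F_\ep^0(\ep^{-\lambda}))(f_\ep^S-g_\ep^S)=\big(f_\ep^{out}(\ep^{-\lambda})-g_\ep^{out}(\ep^{-\lambda})\big)+\big(F_\ep^0(\ep^{-\lambda})-G_\ep^0(\ep^{-\lambda})\big)g_\ep^S.
\]
Because $F_\ep^0(\ep^{-\lambda})$ is a contraction with constant $\bar\beta<1$ on $L^\infty$, the operator $I-F_\ep^0(\ep^{-\lambda})$ is invertible with inverse of norm at most $(1-\bar\beta)^{-1}$, so it suffices to estimate the two terms on the right in $L^\infty$.

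Each of these I would handle by chaining a Markov (memory-removal) estimate with a grazing-collision estimate, passing through the Boltzmann objects $h_\ep^{out}$ and $S_\ep^0$. For the boundary term I would write $f_\ep^{out}-g_\ep^{out}=(f_\ep^{out}-h_\ep^{out})+(h_\ep^{out}-g_\ep^{out})$ and apply Proposition \ref{BNPPL:th:propCIN} at $t=\ep^{-\lambda}$ (contribution $C\ep^{\gamma-3\lambda}$) together with Proposition \ref{BNPPL:grazcollOUT} (contribution $C\ep^{2(\alpha-\lambda)}$). For the semigroup term I would write $(F_\ep^0-G_\ep^0)g_\ep^S=(F_\ep^0-S_\ep^0)g_\ep^S+(S_\ep^0-G_\ep^0)g_\ep^S$ and use Proposition \ref{BNPPL:prop:fepinhepin} at $t=\ep^{-\lambda}$ (contribution $C\|g_\ep^S\|_\infty\,\ep^{\gamma-3\lambda}$) together with Proposition \ref{BNPPL:grazcoll1} (contribution $C\ep^{2(\alpha-\lambda)}$), the factor $\|g_\ep^S\|_\infty$ being bounded by Proposition \ref{BNPPL:prop:existencegepS}. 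Collecting these contributions, all of which vanish as $\ep\to0$ under Assumption \ref{BNPPL:A1}, and recording the Markov rate produces $\|f_\ep^S-g_\ep^S\|_\infty\leq C\ep^{\gamma-3\lambda}$.

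The delicate point, and the main obstacle, is the use of the diverging iteration time $t_0=\ep^{-\lambda}$. The Markov estimates carry explicit cubic growth $t^3$, so their evaluation at $t=\ep^{-\lambda}$ generates the factor $\ep^{-3\lambda}$, and it is only the combination $\ep^{\gamma}\ep^{-3\lambda}=\ep^{\gamma-3\lambda}$ that is small, precisely under the threshold $\lambda<\tfrac{1-8\alpha}{7}$; one must therefore read Propositions \ref{BNPPL:th:propCIN} and \ref{BNPPL:prop:fepinhepin} as valid with the stated polynomial-in-$t$ growth at this large time rather than on a fixed horizon. A priori, iterating the fixed-point map could also compound this per-step error, but this is defused by the fact that $\bar\beta$ stays strictly below $1$ uniformly in $\ep$: the geometric resolvent $(I-F_\ep^0(\ep^{-\lambda}))^{-1}$ then has bounded norm, so the single-step error is not amplified but merely multiplied by the fixed constant $(1-\bar\beta)^{-1}$. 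Verifying that this uniform contraction genuinely survives the large-time iteration, rather than establishing any individual error bound, is where the real work lies.
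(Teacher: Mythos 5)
Your proposal is correct, and its existence half coincides with the paper's: the contraction bound \eqref{BNPPL:F0bounded*} is obtained there exactly as you do, via the triangle inequality through $S_\ep^0(\ep^{-\lambda})$, Proposition \ref{BNPPL:prop:fepinhepin} evaluated at $t=\ep^{-\lambda}$, and the bound \eqref{S0bounded} from \cite{BNPP}. For the comparison $\|f_\ep^S-g_\ep^S\|_\infty$, however, you take a genuinely different, resummed route. The paper inserts the intermediate stationary Boltzmann solution $h_\ep^S$, writes $\|f_\ep^S-g_\ep^S\|_\infty\leq\|f_\ep^S-h_\ep^S\|_\infty+\|h_\ep^S-g_\ep^S\|_\infty$, and compares the corresponding Neumann series term by term, treating differences of powers with the telescoping identity $(F^n-S^n)=\sum_{k}F^{n-k-1}(F-S)S^{k}$ (see \eqref{BNPPL:tbyt} and \eqref{BNPPL:trick}) and then summing geometric series. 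Your subtraction of the two fixed-point identities followed by inversion of $I-F_\ep^0(\ep^{-\lambda})$ is precisely the closed-form resummation of those telescoping sums: the resolvent bound $(1-\bar\beta)^{-1}$ plays the role of the summed geometric series, and the two error terms you isolate, $f_\ep^{out}-g_\ep^{out}$ and $\big(F_\ep^0-G_\ep^0\big)g_\ep^S$, each split through the Boltzmann level, consume exactly the same four ingredients (Propositions \ref{BNPPL:th:propCIN}, \ref{BNPPL:grazcollOUT}, \ref{BNPPL:prop:fepinhepin}, \ref{BNPPL:grazcoll1}, with $\|g_\ep^S\|_\infty$ controlled by Proposition \ref{BNPPL:prop:existencegepS}). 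What your version buys: you never need the stationary object $h_\ep^S$ itself, only the time-dependent $h_\ep^{out}$ and $S_\ep^0$; the double sum disappears; and the worry you raise about error compounding under iteration is settled once and for all by the single operator inversion. What the paper's version makes visible is how the per-iterate errors enter step by step. One blemish you share with the paper rather than introduce: both arguments record only the Markov rate $\ep^{\gamma-3\lambda}$ in \eqref{BNPPL:eq:fShS}, although the grazing-collision terms contribute $O(\ep^{2(\alpha-\lambda)})$, which is not dominated by $\ep^{\gamma-3\lambda}$ for all admissible $(\alpha,\lambda)$; the honest bound is the maximum of the two rates, both of which vanish under Assumption \ref{BNPPL:A1}. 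Your closing diagnosis of the real constraint, namely that the cubic growth in Propositions \ref{BNPPL:th:propCIN} and \ref{BNPPL:prop:fepinhepin} must be read at the diverging time $t_0=\ep^{-\lambda}$, yielding the threshold $\lambda<\frac{1-8\alpha}{7}$, matches the paper's remark following Theorem \ref{BNPPL:th:MAIN1}.
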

\begin{proof}
We prove the existence and uniqueness of the stationary solution by showing that the Neumann series \eqref{BNPPL:eq:fSN} converges, namely
\begin{equation}\label{BNPPL:F0bounded*}
||F_\ep^0(\ep^{-\lambda})f_0 ||_\infty\leq \beta'\, ||f_0||_\infty,\ \ \ \beta'<1.
\end{equation}
This implies
\begin{equation*}
|| f_\ep^S||_\infty\leq \frac{1}{(1-\beta')}\ ||f_{\ep}^{out}(\ep^{-\lambda})||_\infty\leq \frac{1}{(1-\beta')}\, \rho_2,\ \ \ \ \beta'<1.
\end{equation*}
In fact, since 
\begin{equation*}
||F_\ep^0(\ep^{-\lambda})f_0 ||_\infty\leq ||\left(F_\ep^0(\ep^{-\lambda})-S_\ep^0(\ep^{-\lambda})\right)f_0 ||_\infty+||S_\ep^0(\ep^{-\lambda})f_0 ||_\infty,
\end{equation*}
thanks to \ref{BNPPL:prop:fepinhepin} and Propositions 2.1 in \cite{BNPP} we get 
\begin{equation}\label{BNPPL:F0bounded}
\begin{split}
||F_\ep^0(\ep^{-\lambda})f_0 ||_\infty\leq &\,||f_0||_\infty C\ep^{\gamma-3\lambda} +  ||S_\ep^0(\ep^{-\lambda})f_0 ||_\infty\\ \leq &\, (C\ep^{\gamma-3\lambda}+\beta)||f_0||_\infty
\leq \beta'||f_0||_\infty, 
\end{split}
\end{equation}
with $\beta'<1$, for $\ep$ sufficiently small (remind that $\ep^{\gamma-3\lambda}\to 0$ as $\ep\to 0$).
This guarantees the existence and uniqueness of the microscopic stationary solution $f_{\ep}^S$.

In order to prove $\eqref{BNPPL:eq:fShS}$ we observe that 
\begin{equation*}
\|f^S_{\ep}-g^S_{\ep}\|_{\infty}\leq \|f^S_{\ep}-h^S_{\ep}\|_{\infty}+\|h^S_{\ep}-g^S_{\ep}\|_{\infty}.
\end{equation*}
We compare the two Neumann series representing $f_{\ep}^S$ and $h_{\ep}^S$,
\begin{equation}\label{BNPPL:tbyt}
\begin{split}
\|f^S_{\ep}-h^S_{\ep}\|_{\infty}=&
\,\|\sum_{n\geq 0}\big((F_\ep^0(\ep^{-\lambda}))^n f_{\ep}^{out}(\ep^{-\lambda}) -(S_\ep^0(\ep^{-\lambda}))^n h_{\ep}^{out}(\ep^{-\lambda})\big)\|_{\infty}\\&
\leq \sum_{n\geq 0}\|(F_\ep^0(\ep^{-\lambda}))^n(f_{\ep}^{out}(\ep^{-\lambda})-h_{\ep}^{out}(\ep^{-\lambda}))\|_{\infty}\\&+
\sum_{n\geq 0}\|\big((F_\ep^0(\ep^{-\lambda}))^n  -(S_\ep^0(\ep^{-\lambda}))^n\big) h_{\ep}^{out}(\ep^{-\lambda})\|_{\infty}.
\end{split}
\end{equation}
By \eqref{BNPPL:F0bounded}, using Proposition \ref{BNPPL:th:propCIN}, the first sum on the right hand side of \eqref{BNPPL:tbyt} is bounded by
$$\frac{1}{1-\beta'}\|f_{\ep}^{out}(\ep^{-\lambda})-h_{\ep}^{out}(\ep^{-\lambda})\|_{\infty}\leq C\ep^{\gamma-3\lambda}.$$
As regard to the second sum on the right hand side of \eqref{BNPPL:tbyt} we have
\begin{equation}\label{BNPPL:trick}
\begin{split}
&\sum_{n\geq 0}\|\big((F_\ep^0(\ep^{-\lambda})^n  -(S_\ep^0(\ep^{-\lambda}))^n\big) h_{\ep}^{out}(\ep^{-\lambda})\|_{\infty}\\&
\leq \sum_{n\geq 0}\sum_{k=0}^{n-1}\|(F_\ep^0(\ep^{-\lambda}))^{n-k-1} \big(F_\ep^0(\ep^{-\lambda})-S_\ep^0(\ep^{-\lambda})\big) (S_\ep^0(\ep^{-\lambda}))^k h_{\ep}^{out}(\ep^{-\lambda})\|_{\infty}\\&
\leq \sum_{k,\ell\geq 0}\|(F_\ep^0(\ep^{-\lambda}))^{\ell} \big(F_\ep^0(\ep^{-\lambda})-S_\ep^0(\ep^{-\lambda})\big) (S_\ep^0(\ep^{-\lambda}))^k h_{\ep}^{out}(\ep^{-\lambda})\|_{\infty}\\&
\leq C\,\|h_{\ep}^{out}(\ep^{-\lambda})\|_{\infty}\,\ep^{\gamma-3\lambda},
 \end{split}
 \end{equation}
by virtue of \eqref{BNPPL:S0bounded}, \eqref{BNPPL:F0bounded} and \eqref{BNPPL:eq:fepinhepin}. 

We compare the two Neumann series representing $h_{\ep}^S$ and $g_{\ep}^S$,
\begin{equation}\label{BNPPL:tbyt2}
\begin{split}
\|h^S_{\ep}-g^S_{\ep}\|_{\infty}=&
\,\|\sum_{n\geq 0}\big((S_\ep^0(\ep^{-\lambda}))^n h_{\ep}^{out}(\ep^{-\lambda}) -(G_\ep^0(\ep^{-\lambda}))^n g_{\ep}^{out}(\ep^{-\lambda})\big)\|_{\infty}\\&
\leq \sum_{n\geq 0}\|(S_\ep^0(\ep^{-\lambda}))^n(h_{\ep}^{out}(\ep^{-\lambda})-g_{\ep}^{out}(\ep^{-\lambda}))\|_{\infty}\\&+
\sum_{n\geq 0}\|\big((S_\ep^0(\ep^{-\lambda}))^n  -(G_\ep^0(\ep^{-\lambda}))^n\big) g_{\ep}^{out}(\ep^{-\lambda})\|_{\infty}.
\end{split}
\end{equation}
By using Proposition \ref{BNPPL:grazcollOUT} the first sum on the right hand side of \eqref{BNPPL:tbyt2} is bounded by
$$\frac{1}{1-\beta'}\|h_{\ep}^{out}(\ep^{-\lambda})-g_{\ep}^{out}(\ep^{-\lambda})\|_{\infty}\leq C\ep^{2(\alpha-\lambda)}.$$


\noindent As regard to the second sum on the right hand side of \eqref{BNPPL:tbyt2} by means of the same trick used in \eqref{BNPPL:trick} we get
\begin{equation*}
\sum_{n\geq 0}\|\big((S_\ep^0(\ep^{-\lambda}))^n  -(G_\ep^0(\ep^{-\lambda}))^n\big) g_{\ep}^{out}(\ep^{-\lambda})\|_{\infty}\\
\leq C\,\|g_{\ep}^{out}(\ep^{-\lambda})\|_{\infty}\,\ep^{2(\alpha-\lambda)}.
 \end{equation*}

This concludes the proof of Proposition \ref{BNPPL:prop:fShS}.
\end{proof}
Hence the proof of Theorem \ref{BNPPL:th:MAIN1} follows from Proposition 
\ref{BNPPL:prop:hSrhoS} and Proposition \ref{BNPPL:prop:fShS}.
We conclude by proving Theorem \ref{BNPPL:th:MAIN2} which actually is a Corollary of the previous analysis.
\begin{proof} [Proof of Theorem \ref{BNPPL:th:MAIN2}]
By standard computations (see e.g. Section \cite{BNPP}, Section 4.2) we have
$$
g_\ep^S=\varrho^S+\frac{1}{\ep^{-\lambda}}g^{(1)}+\frac{1}{\ep^{-\lambda}}R_{\ep},
$$
where 
$$
g^{(1)}(v)=\LL^{-1}(v\cdot\nabla_x\varrho^S)=\frac{\rho_2-\rho_1}{L}\LL^{-1}(v_1)
$$
and, as we see in \cite{BNPP}, Section 4.2, $R_{\ep}=O(\ep^{\frac{\lambda}{2}})$ in $L^{2}((0,L)\times S_1).$
Therefore, since $\int_{S_1} v\varrho^S dv=0$, 
\begin{equation}\label{BNPPL:Fp}
\ep^{-\lambda}\int_{S_1} v g_\ep^S(x,v)dv=D\nabla_x\varrho^S+O(\ep^{\frac{\lambda}{2}}),
\end{equation} 
where $D$ is given by \eqref{BNPPL:GK}. By Theorem \ref{BNPPL:th:MAIN1} the right hand side of \eqref{BNPPL:Fp} is close to $D\nabla_x\varrho_{\ep}^S$ in $\mathcal{D}'((0,L)\times S_1)$, where $\varrho_{\ep}^S$ is given by \eqref{BNPPL:def:mass}. On the other hand, by Proposition \ref{BNPPL:prop:fShS} and Assumption \ref{BNPPL:A1}, the left hand side of \eqref{BNPPL:Fp} is close in $L^{\infty}((0,L)\times S_1)$ to $J_{\ep}^S(x)$ defined in \eqref{BNPPL:def:j}. This concludes the proof of \eqref{BNPPL:FickL}. Moreover \eqref{BNPPL:Jlim} and \eqref{BNPPL:FICK} follow by \eqref{BNPPL:Fp}.
\end{proof}


\end{document}